\newtheorem{theorem}{Theorem}[section]
\newtheorem{observation}[theorem]{Observation}
\newtheorem{definition}{Definition}[section]
\newtheorem{property}{Property}
\newcommand{\e}{\operatorname{e}}
\begin{document}

\preprint{APS/123-QED}

\title{Measurement-based quantum machine learning}

\author{Luis Mantilla Calder\'on\textsuperscript{1,2,3}}
\email{luis@cs.toronto.edu}
\altaffiliation{this work was conducted while the author was at the University of British Columbia.}

\author{Robert Raussendorf\textsuperscript{3,4,5}}

\author{Polina Feldmann\textsuperscript{3,4,6}}
\altaffiliation{These authors contributed equally to this work.}

\author{Dmytro Bondarenko\textsuperscript{3,4,7}}
\altaffiliation{These authors contributed equally to this work.}

\affiliation{\textsuperscript{1}Department of Computer Science, University of Toronto, Canada}
\affiliation{\textsuperscript{2}Vector Institute for Artificial Intelligence, Toronto, Canada}
\affiliation{\textsuperscript{3}Department of Physics \& Astronomy, University of British Columbia, Vancouver, Canada}
\affiliation{\textsuperscript{4}Stewart Blusson Quantum Matter Institute, University of British Columbia, Vancouver, Canada}
\affiliation{\textsuperscript{5}Institut für Theoretische Physik, Leibniz Universität Hannover, Germany}
\affiliation{\textsuperscript{6}Department of Electrical and Computer Engineering, University of British Columbia, Vancouver, Canada}
\affiliation{\textsuperscript{7}Department of Chemistry, University of British Columbia, Vancouver, Canada}


\begin{abstract}

Quantum machine learning (QML) leverages quantum computing for classical inference, furnishes the processing of quantum data with machine-learning methods, and provides quantum algorithms adapted to noisy devices. Typically, QML proposals are framed in terms of the circuit model of quantum computation. The alternative measurement-based quantum computing (MBQC) paradigm can exhibit lower circuit depths, is naturally compatible with classical co-processing of mid-circuit measurements, and offers a promising avenue towards error correction. Despite significant progress on MBQC devices, QML in terms of MBQC has been hardly explored. We propose the multiple-triangle ansatz (MuTA), a universal quantum neural network assembled from MBQC neurons featuring bias engineering, monotonic expressivity, tunable entanglement, and scalable training. We numerically demonstrate that MuTA can learn a universal set of gates in the presence of noise, a quantum-state classifier, as well as a quantum instrument, and classify classical data using a quantum kernel tailored to MuTA. Finally, we incorporate hardware constraints imposed by photonic Gottesman-Kitaev-Preskill qubits. Our framework lays the foundation for versatile quantum neural networks native to MBQC, allowing to explore MBQC-specific algorithmic advantages and QML on MBQC devices.

\end{abstract}

\maketitle

\section{\label{sec:intro}Introduction}

The idea of quantum computing has led to the development of several algorithms that bring significant speedups over their classical counterparts. Famous examples include Shor's algorithm, Grover's algorithm, and quantum simulation, which outperform classical algorithms in the large-scale limit. Quantum computing can also be used to process quantum data, with sources spanning quantum sensors, quantum communication devices, and quantum simulators. 
However, most quantum algorithms require a large number of qubits and gates with the addition of quantum error correction, which is a big challenge for current quantum hardware.  Thus, it is important to develop algorithms that utilize fewer resources and can tolerate more noise.

Measurement-based quantum computing (MBQC)~\cite{raussendorf_one-way_2001,raussendorf_measurement-based_2003,raussendorf_review_2012} executes a sequence of adaptive single-qubit measurements on an entangled resource state. 
One of the best-known advantages over the standard unitary-circuit model of quantum computing is MBQC's compatibility with probabilistic entangling gates~\cite{duan2005efficient}. This makes MBQC particularly appealing for photonic implementations~\cite{chen2024heralded,lib2024resource,yokoyama2013ultra}. Cluster resource states can also be efficiently generated using ultracold atoms in optical lattices~\cite{mandel2003controlled}. Interestingly, measurement-based algorithms can exhibit a lower time complexity than their circuit-based counterparts~\cite{broadbent2009parallelizing}. For example, the ground state of the error-correcting toric code can be obtained in constant time~\cite{raussendorf_long-range_2005}. While MBQC usually incurs an (at most polynomial) overhead in qubit number, sequential generation of the resource state allows for balancing time complexity benefits against qubit overhead~\cite{raussendorf_measurement-based_2003}. MBQC supports topological error correction~\cite{raussendorf2006fault} and is inherently compatible with mid-circuit measurements with feedback.
Such measurements are known to reduce the qubit count in several algorithms, including Shor's algorithm~\cite{rossi_using_2021}. Impressively, for matrix product states these savings become exponential \cite{malz_preparation_2023}.

Recently, quantum machine learning (QML) \cite{biamonte_quantum_2017} has been explored as a method for developing novel quantum algorithms. QML uses quantum hardware
as machine learning models to perform inference on datasets. The primary objective is to leverage quantum advantage, ideally even before reaching fault-tolerant systems. Recent literature, including Refs.~\cite{abbas_power_2021,cong_quantum_2019,beer_training_2020,schuld_quantum_2019,tilly_variational_2022}, has provided evidence for the utility of quantum neural networks (QNNs). Quantum speed-ups have been proven for variational quantum classifiers~\cite{jager2023universal} and quantum kernels~\cite{liu2021rigorous,jager2023universal}.
Beyond its potential for a computational advantage over classical approaches, QML is inherently valuable, offering automated development of quantum algorithms, novel approaches to quantum and classical problem solving, and anticipated applications in the analysis of quantum data.

In this paper, we introduce a framework for designing QML models based on the MBQC paradigm of quantum computing. In particular, we propose a family of feed-forward QNNs defined through parameterized single-qubit measurements on specifically structured graph states. We name our QNN architecture the multiple-triangle ansatz (MuTA). MBQC on graph states has previously been explored in the context of the variational quantum eigensolver, using edge decorations that yield a non-deterministic ansatz \cite{ferguson_measurement-based_2021}, or directly translating from the circuit model to MBQC~\cite{ferguson_measurement-based_2021, qin2024applicability}. In contrast, we propose general-purpose QNNs designed for deterministic MBQC. Reference~\cite{majumder2024variational} exploits the randomness inherent in MBQC for generative modeling using cluster states. Our ansatz is compatible with such partial randomness while offering greater flexibility, particularly in terms of connectivity. 
MuTA stands out by a unique combination of useful features, including universality, tunable entanglement, monotonicity of expressivity, versatile bias engineering, and scalable training. These attributes make the MuTA architecture a robust foundation for developing future QNN models within MBQC.

The paper is organized as follows. In Section \ref{sec: MBQC}, we briefly review and combine the concepts of MBQC and QML.  Section \ref{sec:MUTA} presents our main result. We introduce MuTA, a QNN architecture based on MBQC, and establish several desirable properties such as universality and versatile bias engineering. In Section \ref{sec:Experiments} we numerically demonstrate the practical applicability of MuTA by learning a universal set of gates, a quantum-state classifier, and a quantum instrument. We provide numerical evidence that MuTA tolerates various types of noise. Finally, we design a quantum kernel based on MuTA and use it to classify classical data. In Section~\ref{sec:hea}, we accommodate hardware restrictions arising in typical photonic implementations and suggest heuristic algorithms for training the adapted models. We close with a summary and outlook in Section~\ref{sec:discussion}.

\section{\label{sec: MBQC}Measurement-based quantum machine learning}

MBQC works by preparing an entangled state known as a \emph{resource state} and then performing single-qubit measurements that consume this state \cite{raussendorf_one-way_2001}. Examples of resource states include graph states,
\begin{equation}
\label{eq: graphstate}
    \ket{G} = \prod_{(i,j) \in E} \operatorname{CZ}_{ij} \ket{+}^{\otimes n}.
\end{equation}
Here, $G=(V,E)$ denotes a graph with vertices $V$ and edges $E$, $X\ket{+}=\ket{+}$, and CZ is a Controlled-$Z$ gate. $X$, $Y$, and $Z$ are Pauli operators. To use $\ket{G}$ for MBQC, we define sets of input and output qubits $I,O\subset V$, respectively, together with single-qubit measurements $M^i$ at each node $i\notin O$. A common choice, which we assume in this paper, is $M^i_{\alpha}=\cos(\alpha) X_i + \sin(\alpha) Y_i$, $\alpha \in (-\pi, \pi]$. However, extended MBQC allows $M^i$ to be in the $(X,Z)$ and $(Y,Z)$ planes, too ~\cite{browne_generalized_2007}. The input of the circuit can be a quantum state $\ket{\psi}_{\mathrm{in}}$, and in this case we replace the $\ket{+}$ qubits in $I$ by $\ket{\psi}_{\mathrm{in}}$ before applying the CZ gates in Eq.~\eqref{eq: graphstate}. If the input is classical, we can encode it in an input quantum state or in the measurement angles $\alpha(i)$ of $M_\alpha^i$. The output of the circuit is stored in $O$. 

To obtain a deterministic output in spite of the inherent randomness of quantum measurements, the measurement angles have to be conditioned on previous measurement outcomes. For MBQC on a graph state, the presence of \emph{flow} guarantees a viable conditioning~\cite{danos_determinism_2006}. Let $N_G(i)$, $i\in V$ denote the neighbors of $i$ in $G$. A resource state characterized by $(G, I, O)$ has flow iff there exists a function $f:V\setminus O \rightarrow V\setminus I$ and a partial order $<$ such that $\forall i \in V$
\begin{itemize}
    \item $(i, f(i)) \in E$, 
    \item $i < f(i)$,
    \item $\forall j \in N_G [f(i)] \setminus \{i\}: \quad i<j$.
\end{itemize}
With flow, measuring in a time order compatible with $<$ allows to compensate a $(-1)$-outcome on any qubit $i\notin O$ by subsequently applying $X_{f(i)}\prod_{j \in N_G [f(i)] \setminus \{i\}}Z_j$. This compensation, which ensures determinism, is equivalent to a simple adaptation of measurement angles and a conditional Pauli operator acting on the output state. To efficiently find the flow of a graph, we can use the algorithms presented in Refs.~\cite{de_beaudrap_finding_2008, mhalla_finding_2008}. There is a more general condition known as Pauli flow \cite{browne_generalized_2007}, which is necessary and sufficient for uniformly deterministic computation; however, for the sake of simplicity, we consider only graphs with flow.

Classical (supervised) machine learning aims to infer a map $\mathcal{M}$ from features $\mathcal{X}$ to labels $\mathcal{Y}$ based on a limited dataset $\mathcal{D} = \{(x_i, y_i)\}_{i=1}^N$ with $x_i\in\mathcal{X}$, $y_i=\mathcal{M}(x_i)\in\mathcal{Y}$. During training, the machine-learning model $f_\theta: \mathcal{X} \rightarrow \mathcal{Y}$, parameterized by $\theta$, learns to approximate $\mathcal{M}$ on $\mathcal{D}$. This learning process is driven by a loss function $\ell(\theta; \mathcal{D})$ which evaluates the difference between predicted labels $f_\theta(x_i)$ and true labels $y_i$. Minimizing $\ell(\theta; \mathcal{D})$ optimizes the variational parameter $\theta$. A key requirement for successful machine learning is that the model generalizes well to unseen (but probable) data.

Integrating the discussed concepts, we introduce a machine-learning model for measurement-based quantum machine learning (MB-QML) which is specified by the underlying graph state $\ket{G}$ with flow. The sets of mixed states $\mathcal{R}_{I/O}$ associated with the input and output qubits define the quantum features $\mathcal{X}=\mathcal{R}_I$ and labels $\mathcal{Y}=\mathcal{R}_O$. The model $f_\theta$ is the unitary embedding $\mathcal{U}_{\vec{\alpha}}:\mathcal{R}_I\to\mathcal{R}_O$ realized by an MBQC on $\ket{G}$ with measurement angles $\vec{\alpha}=(\alpha_0,\ldots,\alpha_{|V\setminus O|-1})$.
The latter constitute the variational parameter $\theta=\vec{\alpha}$ of our ansatz $\mathcal{U}_{\vec{\alpha}}$. Throughout, we assume that measurement angles are adapted for determinism according to the flow of $\ket{G}$. For pure input and output states and a dataset $\mathcal{D}=\{(\ket{\psi}_i, \ket{\phi}_i)\}_{i=1}^N$, a natural choice for the loss function is
\begin{equation}\label{eq:loss}
    \ell(\vec{\alpha},\mathcal{D})=1-\frac{1}{N}\sum_iF\big(\mathcal{U}_{\vec{\alpha}}\!\ket{\psi}_i,\ket{\phi}_i\big),
\end{equation}
where $F(\ket{\psi},\ket{\phi})=|\langle\psi|\phi\rangle|^2$ is the fidelity.

\section{The multiple-triangle ansatz \label{sec:MUTA}}

Classical deep neural networks (NNs) have shown an incredible power for learning functions of real-world data. Each node of a NN represents a small set of operations comprising an \emph{artificial neuron}. A connection from node $i$ to $j$ implies that the output of neuron $i$ serves as one of the inputs to neuron $j$. Several architectures have been proposed to solve problems with a different structure, including multi-layer perceptrons, convolutional neural networks, transformers, graph neural networks, and many more \cite{goodfellow2016deep, zhang2023dive}. Importantly, classical NNs are capable of (I) \emph{universally approximating} certain classes of functions, 
(II) \emph{tunable message passing} between different nodes, 
(III) \emph{engineerable bias}, and  (IV) \emph{learning} at scale. Networks composed of ReLU neurons---currently the most popular choice~\cite{nair2010rectified, goodfellow2016deep}---additionally exhibit (V) \emph{monotonicity of expressivity}. For further details on classical NNs and the expressivity of ReLU neurons see Appendix~\ref{App:ExpressMononton}. A quest for QNN models has driven the community to propose many ideas for constructing circuits that behave similarly to classical NNs~\cite{beer_training_2020, da_silva_quantum_2016, benedetti_parameterized_2019}.

\begin{figure*}[t] 
    \centering
    \subfloat[]{%
        \includegraphics[width=0.6\textwidth]{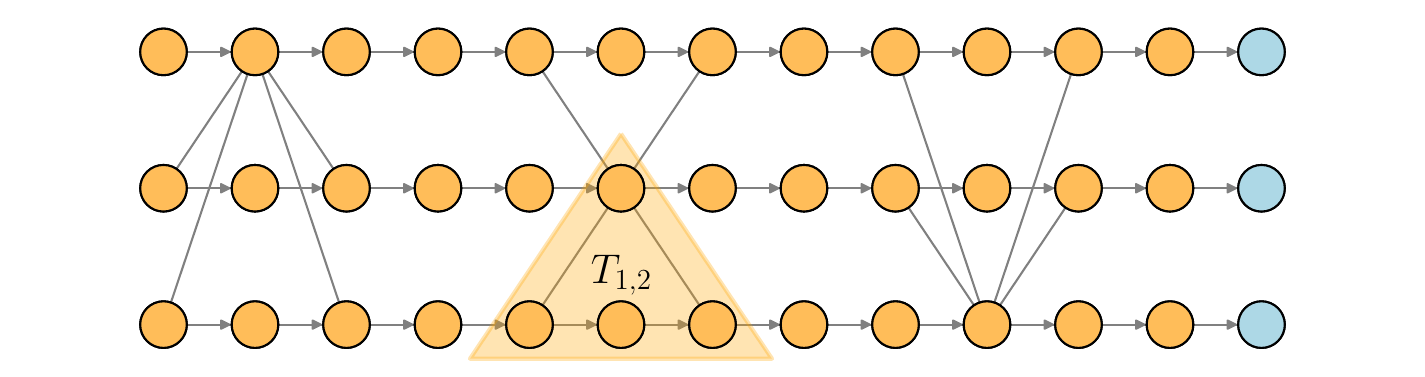}
        \label{fig:muta}
    }
    \subfloat[]{%
        \includegraphics[width=0.34\textwidth]{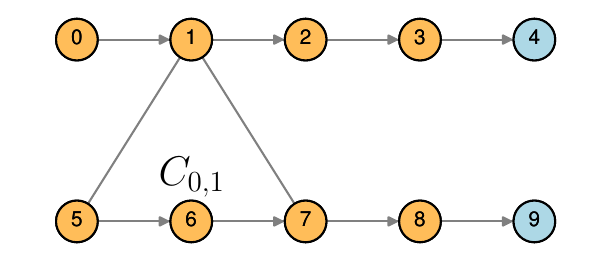}
        \label{fig:oneLayerMuta}
    }
    \caption{Illustrations of the multiple-triangle ansatz (MuTA) architecture: (a) A MuTA for three qubits, showing input qubits $I$ on the left, output qubits $O$ on the right (in blue), with arrows indicating the flow of $(G,I,O)$; (b) A $(2,0)$-MuTA layer.}
    \label{fig:muta_combined}
\end{figure*}

Here, we introduce QNNs which are native to MBQC and exhibit a number of beneficial properties similar to those present in classical NNs. The centerpiece of our framework is a resource state constructed from simple, yet universal, building blocks inspired by Ref.~\cite{danos_parsimonious_2005}. Our \emph{multiple-triangle ansatz (MuTA)} consists of 1D cluster states connected by multiple triangles as shown in Fig.~\ref{fig:muta_combined}. To navigate the qubits of this ansatz, we introduce the following sets: $Q_{i,k}$ denotes the qubit in row $i$ and column $k$, \(T_{i,j,k}=Q_{i,k}\cup Q_{j,k-1}\cup Q_{j,k}\cup Q_{j,k+1}\) is the set of qubits in a triangle connecting wires $i$ (tip) and $j$ (base) around column $k$, and \(C_{i,j,k}=Q_{j,k} \subset T_{i,j,k}\) forms the center of its base. We denote $C_{i,j,k}$ and $T_{i,j,k}$ by $C_{i,j}$ and $T_{i,j}$ whenever $k$ is unambiguous. To build MuTA layer $(n,i)$, we connect $n$ wires of length $5$ by a collection of triangles $T_{i,j,1}$ with $j\in J\subseteq \{0,\dots, n-1\}\backslash{\{i\}}$ (note that we count starting from zero). We call $(i,J)$ the intra-layer connectivity and say that a MuTA layer is fully connected if $J=\{0,\ldots,n-1\}\setminus\{i\}$. In the opposite case of no intra-layer connectivity, i.\,e., $J=\emptyset$, we denote the MuTA layer by $(n)$. A MuTA of depth $d$ consists of $d$ MuTA layers. 
We concatenate two layers $(n,i)$ and $(m,j)$ by identifying $k\leq \min(n,m)$ output qubits $K_O\subseteq O_l$ of layer $l$ with $k$ input qubits $K_I\subseteq I_{l+1}$ of layer $l+1$ via a map $\iota_{l,l+1}:K_O\rightarrow K_I$. The input qubits of the subsequent layer (output qubits of the preceding layer) that do not participate in the concatenation join the input (output) set of the overall graph state. The map $\iota_{l,l+1}$ determines the inter-layer connectivity. We show that this architecture satisfies the following desiderata:

\begin{property}[Determinism]
\label{prop:determinism}
Any measurement pattern on MuTA can be performed deterministically.
\end{property}
\begin{proof}

MuTA has a flow $f$ defined by \( f(Q_{i,k}) = Q_{i, k + 1} \) for each node \( i \in V\setminus O \). A compatible partial order $<$ is established as follows: 
$Q_{i,k} < Q_{i',k'}$ if (I) $k'=k+1$ and $i'=i$, (II) $k=k'=4m$, $i'\neq i$, and $Q_{i,k+1}$ is a tip qubit, or (III) $k=k'=4m+1$, $i'\neq i$, and $Q_{i',k'}$ is a tip qubit $(m\in\mathbb{N}_0)$.
Thus, for any measurement pattern, Theorem 1 from Ref.~\cite{danos_determinism_2006} explicitly states how to adapt measurement angles to obtain a deterministic outcome. 
\end{proof}

\begin{property}[Universality]
\label{prop:universality}
Given any $n$-qubit unitary operator \( V \in U(2^n) \), there exists a measurement pattern on a sufficiently deep MuTA of width $n$ that exactly implements $V$.

\end{property}
\begin{proof}
The single-qubit gates $V'\in U(2)$ and an entangling gate such as $\text{Ising}X\!X(\varphi)=\exp(-i \varphi X\otimes X/2)$ for $\varphi \notin \pi \mathbb{Z}$ form a universal set of gates \cite{nielsen2010quantum, bremner_practical_2002}, i.\,e., any \( V \in U(2^n) \) can be decomposed into a product of these gates. Both types of gates can be implemented by MuTA using the measurement patterns in Table~\ref{tab:muta-gates}. First, any single-qubit gate (including the identity) can be realized by measuring a five-qubit wire with angles given by the Euler decomposition $V'=R_x (-\lambda) R_z (-\phi) R_x (-\theta)$, where $R_x (\alpha)= \exp(-i\alpha X/2)$ and $R_z (\alpha)= \exp(-i\alpha Z/2)$~\cite{raussendorf_one-way_2001}. Second, the $\text{Ising}X\!X(\varphi)$ gate can be implemented on a MuTA layer $(2,0)$, which follows from the unitary gate corresponding to $C_{0,1}$ (see Appendix~\ref{append:Generators}, Eq.~\eqref{eq:unitarymuta20}). Therefore, a MuTA layer $(n)$ consisting of only $n$ disconnected five-qubit wires can apply arbitrary single-qubit unitaries to all wires, while a MuTA layer $(n,i)$ with a single triangle $T_{i,j}$ can implement $\text{Ising}X\!X(\varphi)$ on wires $i,j$. Concatenating these two types of layers with their corresponding measurement patterns yields a MuTA that exactly implements $V$.

\begin{table}[ht]
\centering
\renewcommand{\arraystretch}{1.5} 
\begin{tabular}{|c|c|}
\hline
\textbf{Gate} & \textbf{Measurement Pattern} \\
\hline
$R_x(-\lambda) R_z(-\phi) R_x (-\theta) \otimes I$  & 
$M_{0}^{8} M_{\lambda}^{3} M_{0}^{7} M_{\phi}^{2} M_{\theta}^{1} M_{0}^{6} M_{0}^{5} M_{0}^{0}$, \\
\hline
$\text{Ising}X\!X(-\varphi)$ & $
M_{0}^{8} M_{0}^{3} M_{0}^{7} M_{0}^{2} M_{0}^{1} M_{\varphi}^{6} M_{0}^{5} M_{0}^{0}$ \\
\hline
\end{tabular}
\caption{Measurement patterns for universal gates in MuTA. Indices are matched with those from Fig.~\ref{fig:oneLayerMuta}.}
\label{tab:muta-gates}
\end{table}

\end{proof}

\begin{property}[Tunable entanglement]
\label{prop:tunentanglement}
A MuTA layer $(n, i)$ with connectivity $(i,J)$ implements
\begin{enumerate}
    \item a non-entangling gate between wires $i$ and $j\in J$ if $C_{i,j}$ is measured at $\alpha_j\in\{0,\pi\}$,
    \item an entangling gate between wires $i$ and $j\in J$ if $C_{i,j}$ is measured at $\alpha_j\notin \{0,\pi\}$.
\end{enumerate}
Here, we call a gate $G$ entangling for wires $i$ and $j$ iff there exists some $n$-qubit state $\ket{\psi}=\ket{\psi_i}\otimes\ket{\psi_j}\otimes\ket{\psi_{\mathrm{rest}}}$ such that $G\ket{\psi}$ reduced to the $(i,j)$-subspace is entangled.
\end{property}

\begin{proof}

In Appendix~\ref{append:Generators}, see Eq.~\eqref{eq:unitarymuta} and the subsequent paragraph, we translate a general MuTA layer into a unitary circuit. This circuit consists of one- and two-qubit gates that correspond to individual MuTA measurements. The two-qubit unitaries belong to $C_{i,k}$ qubits and are of the form $U_{i,k}=\exp(i\alpha_kX_iX_k/2)$, where $\alpha_k$ denotes the measurement angle at $C_{i,k}$ (cf. Property~\ref{prop:universality}). All $U_{i,k}$ act simultaneously (cf. Property~\ref{prop:determinism}), implementing $G=\exp(iX_i\sum_{k\in J}\alpha_k X_k/2)$. Since single-qubit unitaries do not matter for the question of entanglement generation, it is sufficient to show that Property~\ref{prop:tunentanglement} holds for $G$.

Without loss of generality, we set $i=0$ and $j=1$. For $\alpha_1=0$, $G$ does not act on qubit 1; for $\alpha_1=\pi$ it acquires the form $G=X_1\otimes \tilde{G}$. In both cases, $G$ cannot entangle qubit $j$ with any other qubit, which proves the first statement.

For the second statement, we pick $\ket{\psi_{\mathrm{rest}}}=\ket{+}^{\otimes{n-2}}$. This turns $G\ket{\psi}$ reduced to the subspace of qubits $i=0$, $j=1$ into the pure state $G'(\ket{\psi_0}\otimes\ket{\psi_1})$ with $G'=\exp[iX_0(\alpha_1X_1+\gamma)/2]$ and $\gamma=\sum_{k\in J\setminus\{1\}} \alpha_k$. Since 1-qubit operations do not matter, it suffices to show that $G''=\exp[i\alpha_1X_0X_1/2]=\text{Ising}X\!X({-}\alpha_1)$, $\alpha_{1}\notin\{0,\pi\}$ is an entangling gate. To confirm this, we consider $\ket{\phi}=\text{Ising}X\!X({-}\alpha_1)\ket{0,0}$ and observe that the reduced density matrix $\rho=\Tr_1\ket{\phi}\!\bra{\phi}$ describes a mixed state on qubit 0, $\Tr\rho^2=(1+\cos^2\alpha_1)/2<1$ $\forall$ $\alpha_{1}\notin\{0,\pi\}$.

\end{proof}

\begin{property}[Bias engineering]\label{prop:bias}
The set of transformations implementable by a given MuTA depends on the chosen geometry---i.\,e., the overall depth, the width of individual layers, the inter-layer connectivity, and the intra-layer connectivity. It can be further constraint by restricting the values of some or all variational parameters, i.\,e., the measurement angles.
\end{property}
\begin{proof} Obvious.
\end{proof}

\begin{property}[Scalability]\label{prop:scalability}
A MuTA of depth $d$ and maximal width $n$ has $\leq 4 d n$ variational parameters.
\end{property}
\begin{proof} MuTA has at most as many variational parameters as qubits.
Denoting the width of layer $l$ by $n_l\leq n$, this amounts to $4\sum_{l=1}^dn_l\leq 4 d n$.
\end{proof}

\begin{property}[Monotonicity of expressivity]\label{prop:monotonexpressivity}
    The following expansions of a given MuTA monotonically increase the set of implementable gates:
    \begin{enumerate}
        \item Adding layers.
        \item Adding disconnected wires.
        \item Adding admissible triangles to a layer $l$ of width $n$ and inserting an additional layer $l'$ of width $n$ between layers $l$ and $l+1$. Layer $l'$ can have any intralayer connectivity. It takes all outputs of layer $l$ as inputs. The interlayer connectivity between layers $l'$ and $l+1$ is inherited from the initial connectivity between $l$ and $l+1$.
    \end{enumerate}
\end{property}
\begin{proof}
    For weak monotonicity, it is sufficient to show that the expansions do not reduce the set of implementable gates. 
    
    (1) An additional layer can always be turned into the identity operation by measuring all qubits in the $X$ basis $(\alpha=0)$.

    (2) An additional disconnected wire adds an input qubit and applies a single-qubit gate to it, leaving the action on the initial input qubits unaffected.

    (3) As we can see from Eq.~\eqref{eq:unitarymuta} and the subsequent paragraph in Appendix~\ref{append:Generators}, adding a triangle $T_{i,j}$ replaces the gate $\e^{i\alpha_{j,1} X_j/2}$ with $\text{Ising}X\!X_{ij}(-\alpha_{j,1})$, where $\alpha_{j,k}$ denotes the measurement angle at qubit $Q_{j,k}$. For $\alpha_{j,1}=0$, this amounts to removing $\e^{i\alpha_{j,1} X_j/2}$ from $U_j=\e^{i\alpha_{j,3} X_j/2}\e^{i\alpha_{j,2} Z_j/2}\e^{i\alpha_{j,1} X_j/2}\e^{i\alpha_{j,0} Z_j/2}$ acting on qubit $j$ of the input to layer $l$. As a result, wire $j$ of layer $l$ can no longer implement arbitrary single-qubit unitaries. To restore the expressivity, we add layer $l'$, which appends $\e^{i\alpha_{j,0}' Z_j/2}$ to $U_j$ when qubit $Q'_{j,0}$ in layer $l'$ is measured at $\alpha_{j,0}'$ and all other qubits in $X$. Note that it is sufficient to add a single layer $l'$, no matter how many triangles are added to layer $l$.
\end{proof}

\noindent\emph{Remark.} If we make MuTA layers one column deeper---i.\,e., append one qubit to each wire in each MuTA layer---inserting layers when adding triangles becomes unnecessary.


\begin{property}[2-Colorable]\label{prop:2color}
The graph of any MuTA is 2-colorable.
\end{property}
\begin{proof}
A graph is 2-colorable if and only if it is bipartite. Collecting all even columns into one subset and all odd columns into the other yields a bipartition of any MuTA.
\end{proof}

These properties show that MuTA possesses several useful features of classical NNs which may be missing in other QNN proposals. Specifically, Property~\ref{prop:universality} aligns with Feature I (universality), Property~\ref{prop:tunentanglement} with Feature II (message passing), Property~\ref{prop:bias} with Feature III (bias), Property~\ref{prop:scalability} with Feature IV (scalability), and Property~\ref{prop:monotonexpressivity} with Feature V (monotonicity of expressivity). Using 2-colorable graphs (Property~\ref{prop:2color}) is advantageous thanks to scalable purification protocols for the corresponding states~\cite{robert_bicolorable}. These characteristics position MuTA as a promising universal QNN candidate within the MBQC framework.

Note that the properties listed above should not be taken for granted. For example, generic graph states do not support deterministic computation for arbitrary measurement patterns. In the subset of graphs with guaranteed determinism, there is generally no monotonicity of expressivity, neither with respect to the number of nodes nor edges. Graphs consisting of wires with staggered vertical nearest-neighbor connections carrying an additional node each, which are derived from the 2-dimensional cluster state and are commonly used to prove the universality of MBQC~\cite{raussendorf_review_2012}, also do not ensure determinism. Removing the additional nodes on the vertical connections yields a deterministic and universal ansatz similar to MuTA. In particular, taking two wires of length 7 vertically connected in columns 1 and 3 and measuring the qubits $Q_{0,2}$, $Q_{0,3}$, $Q_{1,0}$, and $Q_{1,6}$ in $X$ yields MuTA layer $(2,0)$~\cite{briegel_graphstates_2004}. However, since only neighboring wires are directly connected in the staggered ansatz, propagating information across its width requires a depth that scales with the number of input qubits. On the other hand, adding more connections to MuTA by allowing the tips of the triangles to be located on different wires within a single layer generally yields graphs without flow.

When it comes to variational circuits, a popular choice is to alternate layers of tunable single-qubit gates with layers of fixed multi-qubit gates such as CNOTs~\cite{mcclean_barren_2018}. However, such ansätze are usually not monotonically expressive, which makes it impossible to identify the required amount of quantum resources by gradually expanding the ansatz. More generally, variational algorithms composed of tunable few-qubit gates do not permit fast information spreading. A resolution based on arbitrary multi-qubit gates leads to an exponential number of variational parameters per fully connected layer~\cite{beer_training_2020,bondarenko2020quantum}, and it can be unclear how to restrict the gates to a sensible number of parameters without jeopardizing essential features such as universality~\cite{torrontegui2019unitary}. On the other hand, Properties~\ref{prop:tunentanglement} and~\ref{prop:scalability} ensure that, for each classical feed-forward NN, we can construct a MuTA with a corresponding connectivity and a comparable number of parameters (see Figure~\ref{fig:classical_to_muta}).

\begin{figure}[ht]
    \centering
    \subfloat[]{{\includegraphics[width=0.22\textwidth]{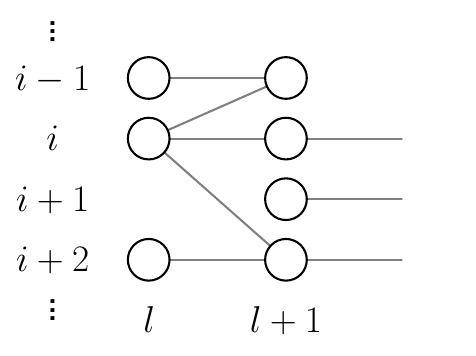} }
    \label{fig:classical_ff}}%
    \subfloat[]{{\includegraphics[width=0.27\textwidth]{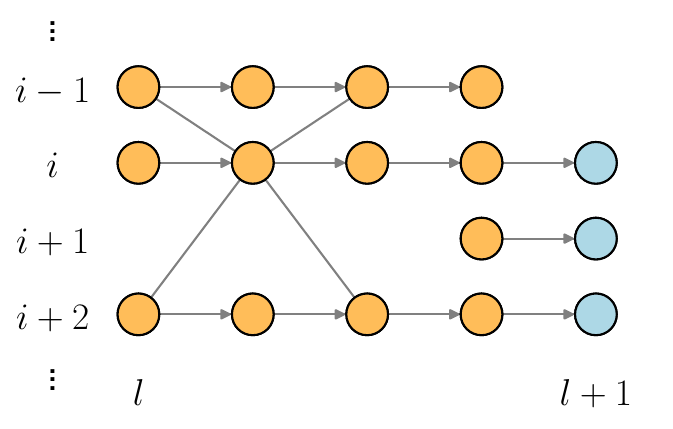} }
    \label{fig:muta_ff}}%
    \caption{A MuTA layer implementing the same connectivity as a classical NN. (a) Layers $l$ and $l+1$ of a feed-forward NN with connectivity between neurons in nodes $i$ and $J = \{i-1, i+2\}$. (b) A MuTA layer $(n,i)$ with connectivity $(i,J)$. 
    Quantizing the same classical NN yields several MuTAs---neurons in subsequent layers can either share a wire or not.
    }
    \label{fig:classical_to_muta}
\end{figure}

\section{Learning to measure \label{sec:Experiments}}

In the preceding section, we have established multiple favorable properties of MuTA, our ansatz for MB-QML. Now, we numerically demonstrate its practical applicability. First, we explore various QML tasks by training the MuTA measurement angles (cf. Fig.~\ref{fig:muta_combined}). In particular, we perform supervised learning on quantum data to learn a universal set of gates and to study the effect of noise. We also classify quantum states according to their metrological usefulness, and learn a quantum instrument that implements teleportation between MuTA wires. Additionally, we introduce an MBQC-specific quantum kernel based on MuTA and use it in a support vector machine (SVM).

\subsection{Universal set of gates}\label{sec:universalgates}

Property~\ref{prop:universality} proves that MuTA can represent a universal set of gates. Here, we demonstrate that such a gate set can easily be learned. Specifically, we show how MuTA layer $(2,0)$, see Fig.~\ref{fig:oneLayerMuta},
learns Haar-random single-qubit gates on the first qubit, as well as the entangling gate $\text{Ising}X\!X(\pi/2)$. For each unitary $U$, the dataset used to learn the measurement angles is 
$\mathcal{D}=\left\{\ket{\psi}_i, U\ket{\psi}_i \right\}_{i=1}^{N}$ and the loss function to minimize is the average infidelity $\ell(\bm{\alpha},\mathcal{D})$ defined in Eq.~\eqref{eq:loss}.

In each run, we use $N=10$ Haar-random states $\ket{\psi}_i$, $N_{\rm tr}=7$ for training and $N_{\rm te}=3$ for testing. In the case of single-qubit unitaries, each run also corresponds to an independent Haar-random choice of $U$. We perform gradient-based optimization using Adam~\cite{kingma2017adam}. Figures~\ref{fig:learningcurve_randomunitaries} and~\ref{fig:learningcurve_isingXX} show the loss over training, averaged over 20 runs with random initializations, for single-qubit unitaries and the $\text{Ising}X\!X$ gate, respectively. In both cases, we observe rapid convergence.

Let us emphasize that, because of the (quantum) no free lunch theorem~\cite{poland_no_2020} and barren plateaus~\cite{mcclean_barren_2018, ragone2023unified}, universality always has to go hand in hand with the ability to implement a bias and to tailor it to specific tasks and prior knowledge, such as symmetries of the data~\cite{schatzki2022theoretical}. To guide the bias engineering, we relate MB-QML ansätze with their infinite-depth expressivity in Appendices~\ref{append:Generators} and~\ref{append:LieAlg}.

\begin{figure}[ht]
    \centering
    \subfloat[]{{\includegraphics[width=0.229\textwidth]{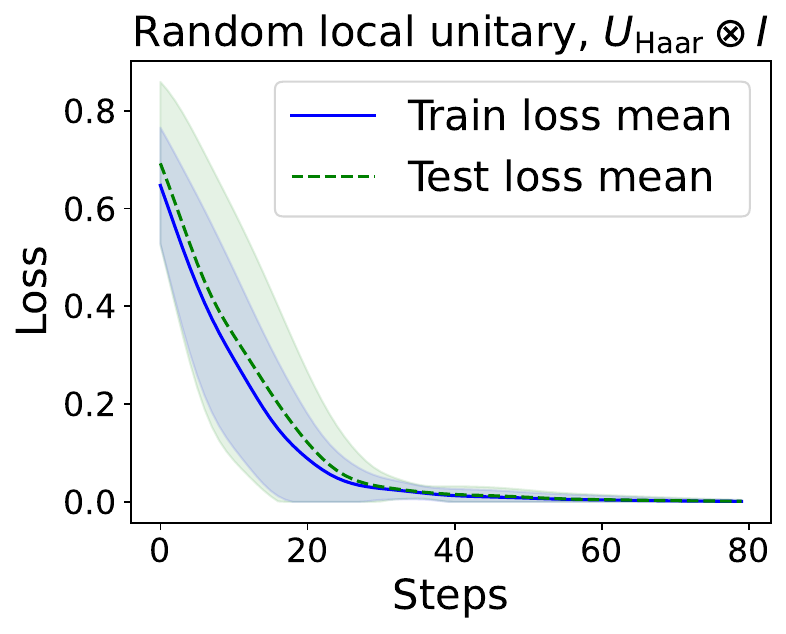} }
    \label{fig:learningcurve_randomunitaries}}%
    \subfloat[]{{\includegraphics[width=0.229\textwidth]{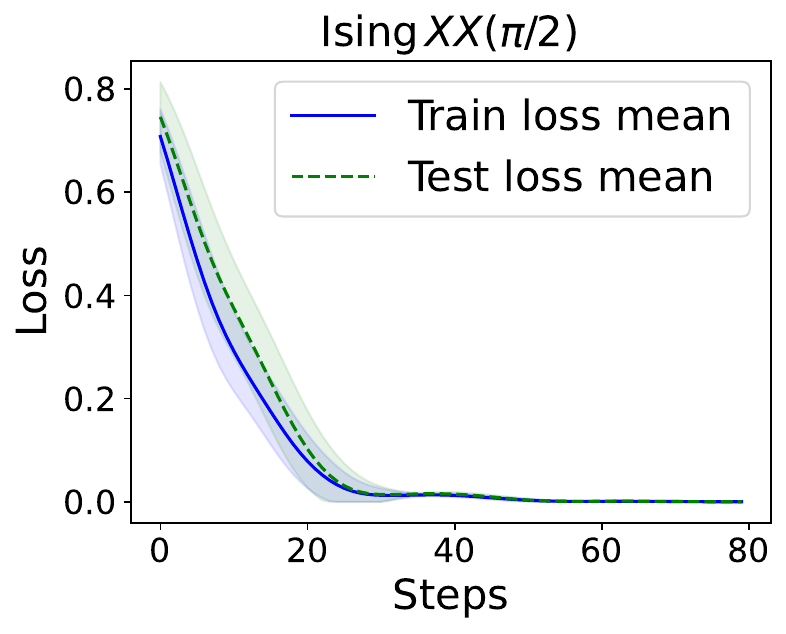} }
    \label{fig:learningcurve_isingXX}}%
    \caption{Learning curves for (a) single-qubit Haar-random unitaries on the first qubit and (b) an $\text{Ising}X\!X\left(\pi/2\right)$ gate, averaged over $20$ runs. The shaded region is the standard deviation.}%
    \label{fig:learningcurves_universal}
\end{figure}

\subsection{Learning with noise}\label{sec:noise}

A criterion for QML to suit current quantum hardware is robustness to noise. We perform the same optimization routine as in Fig.~\ref{fig:learningcurve_isingXX} for two noisy scenarios. First, training the model on noisy data 
$\left\{\ket{\psi}_i,V_iU\ket{\psi}_i\right\}$, where the random unitaries $V_i$ are picked according to a specific noise model, and second, evolving the underlying graph state with a depolarizing channel, which we discuss in Appendix~\ref{append:Depolarizing}.

For the case of noisy data, we consider two types of noise. We approximate the depolarizing channel by applying Brownian circuits $V_i=\prod_{j=0}^{r} e^{i H_{i,j} \Delta t}$. Here, the system briefly evolves for time steps $\Delta t$ under random Hamiltonians $H_{i,j}$, see Ref.~\cite{zhou2019operator} for details. We define the corresponding noise strength as $\frac{\Delta t}{2\pi} \sqrt{2^n r}$. We also study the effect of  bit-flip noise, where $V_i=\bigotimes_jX_j^{\lceil p-f_i\rceil}$ applies a bit flip to qubit $j$ with probability (noise strength) $p$, i.\,e., for $f<p$ with $f\in [0,1)$ uniformly distributed. 

In Figs.~\ref{fig:browniannoise} and \ref{fig:bitflipnoise}, we plot the optimized average fidelity as a function of the noise strength for Brownian and bit-flip noise, respectively. These figures show that increasing the noise strength reduces the fidelity of the trained circuit slowly, as observed in previous QNN proposals~\cite{bondarenko2020quantum, garcia2024effects}. For bit-flip noise, the model can learn the correct parameters as long as the majority of the two-qubit dataset is not flipped---i.\,e., for $p<1-\frac{1}{\sqrt{2}}\approx 0.29$.
\begin{figure}[ht]
    \centering
    \subfloat[]{{\includegraphics[width=0.22\textwidth]{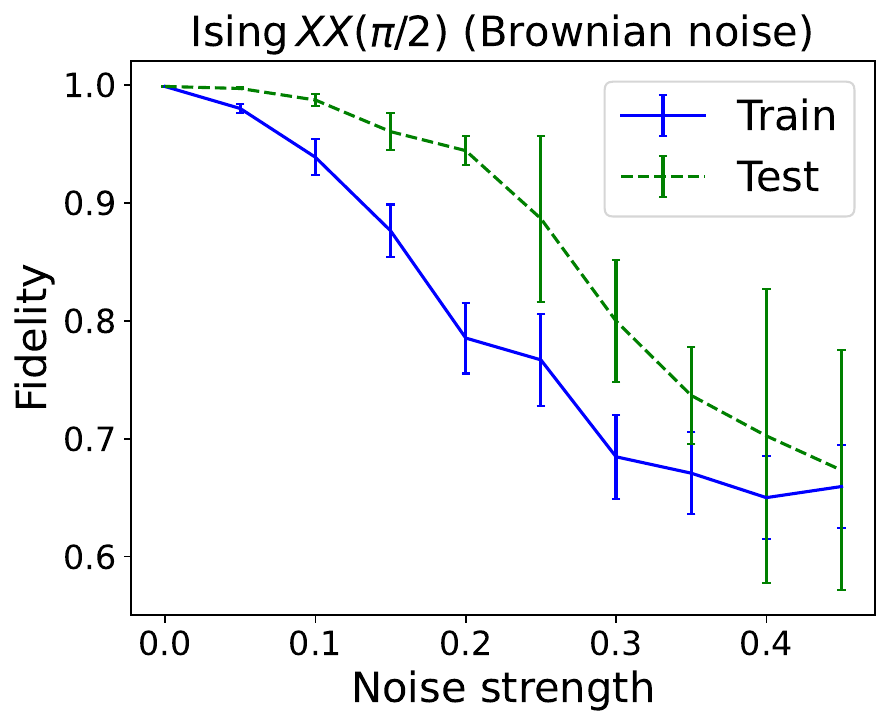} }
    \label{fig:browniannoise}}%
    \quad
    \subfloat[]{{\includegraphics[width=0.22\textwidth]{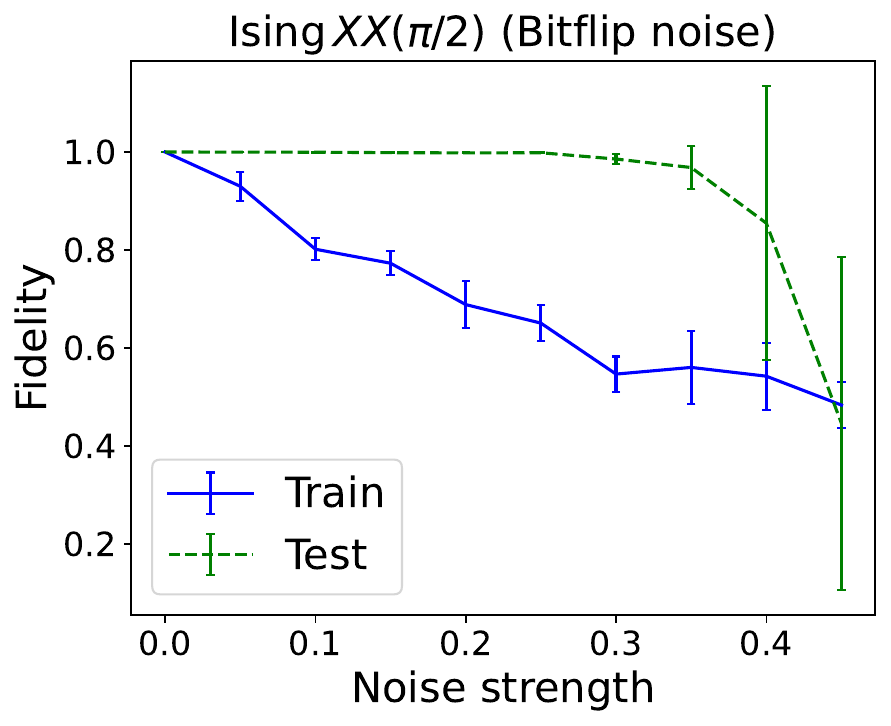} }
    \label{fig:bitflipnoise}}%
    \caption{Stability of MB-QML under noise affecting the training dataset. The testing curves are obtained using noiseless data. Models are trained for $60$ and $200$ steps for Brownian and Bitflip noise, respectively. Each datapoint is an average over $5$ runs, each of which uses a different Haar-random dataset. The dataset size is $N=20$ for Brownian noise and $N=100$ for bit-flip noise. In both cases, the data are split evenly between training and testing.}%
    \label{fig:noisesdataset}
\end{figure}

\subsection{Classifying quantum states}

Classification of complex data constitutes a paradigmatic task for classical~\cite{bishop2006pattern} and quantum~\cite{biamonte_quantum_2017} machine learning. To classify quantum states, we subdivide the task into a positive operator valued measurement (POVM) and subsequent classical post-processing. This ansatz is universal since it includes quantum state tomography followed by arbitrary classical feature extraction.

QML can find applications in quantum sensing and metrology \cite{degen_quantum_2017, giovannetti_advances_2011, caro_out--distribution_2023, li_entanglement_2013, feldmann2021}. Therefore, we train a hybrid quantum-classical model to classify two-qubit states according to their quantum Fisher information (QFI). Quantum metrology exploits entangled probe states $\ket{\psi(\theta)}$ to better estimate an unknown parameter $\theta$. The QFI $F_Q$ asymptotically limits the variance of any estimator $\hat{\theta}$ of $\theta$ via the quantum Cramér-Rao bound: $\operatorname{Var}(\hat{\theta}) \geq [m F_Q(\theta)]^{-1}$, where $m$ is the (sufficiently large) number of experiments. For separable probes, the QFI is tightly bounded by the \emph{standard quantum limit} (SQL), while entangled states allow to surpass the SQL up to the \emph{Heisenberg limit} (HL). Our target classifier $\mathcal{C}$ assigns outcome $0$ to states with a QFI below the SQL and outcome $1$ to states with a QFI above the SQL.

We consider pure states probing $\theta$ through $\ket{\psi(\theta)}=\operatorname{e}^{-i\theta H}\ket{\psi(0)}$ with $H$ a Hermitian operator. The QFI is then given by $F_Q=4\operatorname{Var} H$, where the variance is with respect to $\ket{\psi(0)}$~\cite{toth2013extremal}. For our 2-qubit example, we set $H=h\otimes \mathbf{1} + \mathbf{1} \otimes h$ with $h=\alpha_x X + \alpha_y Y + \alpha_z Z$ and $\alpha_x^2 + \alpha_y^2 + \alpha_z^2 = \frac{1}{4}$, which yields a SQL and HL of $F_Q^{\rm(SQL)}=2$ and $F_Q^{\rm(HL)}=4$, respectively~\cite{feldmann2021}.

The QFI is thus a quadratic function of the probabilities $p^\pm$ to record outcomes $\pm 1$ when measuring $H$ in $\ket{\psi}$. These probabilities can equivalently be extracted by measuring $U\otimes U\ket{\psi}$ in the computational basis if $U$ diagonalizes $h$. To guarantee a sufficient expressivity, we specify our ansatz as follows. The state $\ket{\psi}$ to be classified is input into a MuTA layer $(2)$ without triangles. In each column, we restrict the measurement angles $\boldsymbol{\alpha}$ on the two wires to be equal. We identify $p^\pm$ with the probabilities of outcomes $(0,0)$ and $(1,1)$ for a $(Z,Z)$ measurement of the output state and estimate the QFI by a degree two polynomial $f_{\boldsymbol{\beta}}$ in $p^\pm$ with coefficients $\boldsymbol{\beta}$, $\hat F_{\boldsymbol{\alpha},\boldsymbol{\beta}}(\ket{\psi})=f_{\boldsymbol{\beta}}[p^+_{\boldsymbol{\alpha}}(\ket{\psi}),p^- _{\boldsymbol{\alpha}}(\ket{\psi})]$. Using training data $\{\ket{\psi}_i,y_i\}_{i=1}^N$ with $y_i=0$ ($y_i=1$) for $F_Q(\ket{\psi}_i)\leq 2$ ($F_Q>2$), we optimize the variational parameters $\boldsymbol{\alpha},\boldsymbol{\beta}$ by minimizing the soft-margin loss function
\begin{multline}
    \ell(\boldsymbol{\alpha},\boldsymbol{\beta}) = \frac{1}{N} \sum_{i=1}^N \bigg[ y_i \max \left\{0,-\hat F_{\boldsymbol{\alpha},\boldsymbol{\beta}}(\ket{\psi}_i) + 2 + \epsilon   \right\} + \\
    (1-y_i)\max \left\{0,\hat F_{\boldsymbol{\alpha},\boldsymbol{\beta}}(\ket{\psi}_i) - 2 + \epsilon \right\} \bigg],
\end{multline}
where the parameter $\epsilon$ controls the width of the margin. This loss function is differentiable almost everywhere and has a non-zero gradient when states are misclassified. Setting $\epsilon>0$ avoids the spurious solution $\hat F_{\boldsymbol{\alpha},\boldsymbol{\beta}}=2$.

We sample our set of training states $S = S_1 \cup S_2$ from two two-parameter families,
\begin{align*}
    S_1 &= \left\{\cos{\theta} \ket{00} + e^{i\phi} \sin{\theta}\ket{11} \mid \phi, \theta \sim \mathcal{U}(0, 2\pi)\right\}, \\
    S_2 &= \left\{\cos{\theta} \ket{++} + e^{i\phi} \sin{\theta}\ket{--} \mid \phi, \theta \sim \mathcal{U}(0, 2\pi) \right\},
\end{align*}
where $\mathcal{U}(0, 2\pi)$ is the uniform distribution. These states cover a wide range of QFI for all $h$. Additionally, they can be prepared with a shallow circuit and are spread across a 2-dimensional surface in $(p_+,p_-)$ space for almost every measurement basis.
Setting $h=Z$ and assembling a dataset of $50$ states in $S_1$ and $50$ in $S_2$, we train our model on 80\% of the data using the Adam optimizer and $\epsilon=0.5$. The resulting model obtains a classification accuracy of $0.9725 \pm 0.0042$ on the remaining 20\% of the dataset when ignoring all states that have an estimated QFI between $1.9$ and $2.1$. Figure~\ref{fig:fisherclassifier} shows the classification of Haar-random states, which yields an accuracy of $96\%$ and $99\%$ including or excluding states with $1.9<\hat F<2.1$, respectively. 

\begin{figure}[ht]
    \centering
    \includegraphics[width=0.42\textwidth]{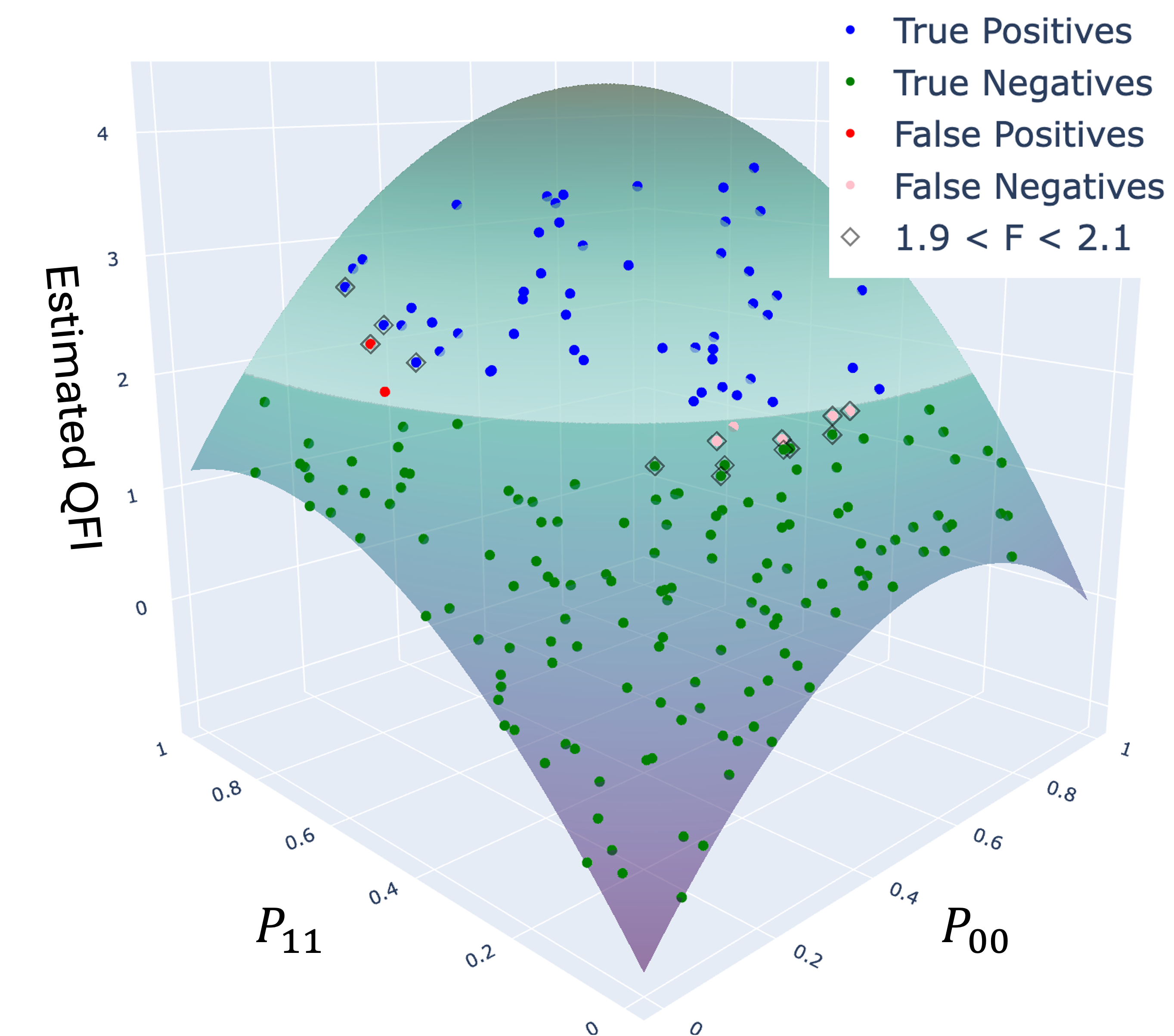}
    \caption{
        Learned surface to classify states of high $({>2})$ and low (${<2}$) QFI. The surface is defined by a degree two polynomial, and the position of each point on the surface is defined by the measurement pattern. The scattered points correspond to Haar-random states.
    }
    \label{fig:fisherclassifier}
\end{figure}

Extracting the QFI for a known operator $h$ does not require any machine learning. However, our example suggests that MB-QML has the capability to classify quantum states even in the absence of an operational definition for the figure of merit. This becomes particularly valuable if we can easily classify a subset of states but lack an efficient method for a larger set of interest. 

\subsection{Quantum instrument for teleportation}

Next, we use MuTA with measurements conditioned on prior measurement outcomes---beyond the adaptation required for determinism---to learn a simple quantum instrument. In general, a quantum instrument $\mathcal{I}=\{(x,E_x)\}_{x\in X}$ is a collection of labels $x\in X$ and corresponding completely positive, trace-non-increasing maps $E_x:\operatorname{End}(\mathcal{H}_1) \rightarrow \operatorname{End}(\mathcal{H}_2)$ such that $\sum_{x\in X} E_x$ is trace-preserving~\cite{gudder2023quantum}.
Quantum instruments describe measurements on quantum states $\rho$ that yield outcome $x$ and post-measurement state $E_x(\rho)/p_x(\rho)$ with probability $p_x(\rho)=\operatorname{Tr}\large[E_x(\rho)\large]$. 

The instrument we study is the teleportation protocol. This protocol proceeds by creating a maximally entangled two-qubit state $\ket{\Phi}$, measuring the one-qubit state $\ket{\psi}$ to be teleported and half of $\ket{\Phi}$ in an entangled basis, and finally applying a unitary transformation conditioned on the measurement outcome to the unmeasured qubit. In our ansatz, specified in the inset and caption of Fig.~\ref{fig:teleportation-lc}, we represent each of these three steps by an appropriate MuTA layer.

\begin{figure}[ht]
    \centering
    \includegraphics[width=0.4\textwidth]{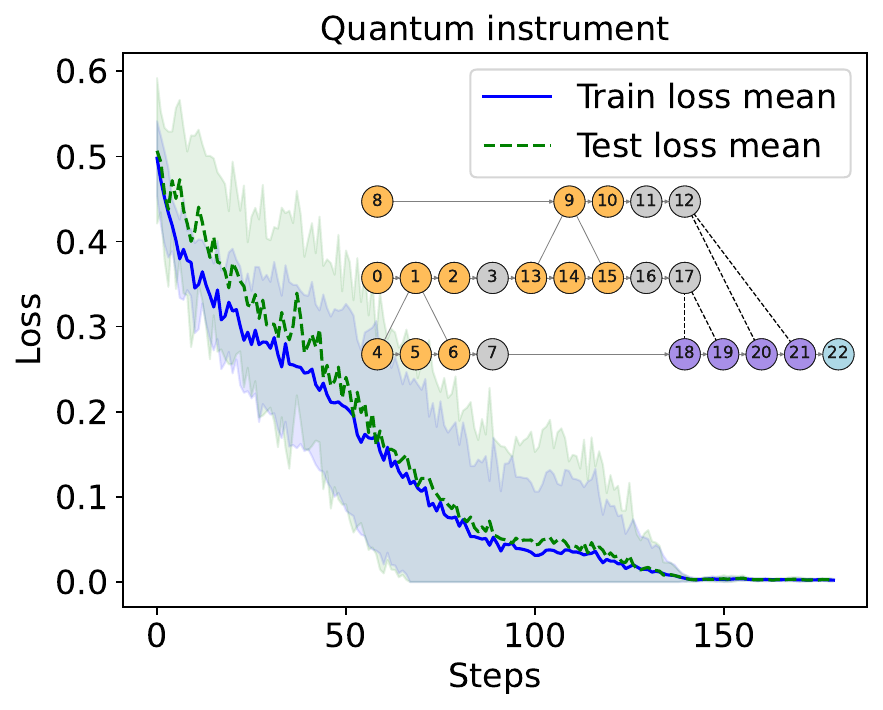}
    \caption{Learning curve for learning a quantum instrument that performs teleportation using MuTA. We average the loss over $10$ runs, each using a different Haar-random dataset with $35$ training states and $15$ test states. The inset shows our ansatz. The node colors yellow, grey, purple, and light blue indicate trainable, fixed, controlled, and output nodes, respectively. Nodes $0$ and $4$ are initialized in $\ket{0}$. The measurements on fixed nodes are $Z$ for nodes 12 and 17 and $X$ otherwise. Controlled nodes are measured in the $XY$ plane at a learnable angle if the controlling measurement outcomes are 1 and in the $X$ basis otherwise.}
    \label{fig:teleportation-lc}
\end{figure}

Figure~\ref{fig:teleportation-lc} shows that the model learns to teleport states from qubit~8 to qubit~22 with unit fidelity. This provides a proof of principle for learning quantum instruments using MuTA and, more generally, for incorporating classical co-processing into MB-QML.

\subsection{Classifying classical data}

Another application and important benchmark for QML is the processing of classical data. There has been extensive work on quantum models for classical classification, and some examples can be found in Refs.~\cite{havlicek_supervised_2019, schuld_quantum_2019, abbas_power_2021}. Here, we create a kernel SVM classifier based on MuTA and use it to classify synthetic datasets.

\begin{figure}[ht]
 \centering
    \includegraphics*[width=0.48\textwidth]{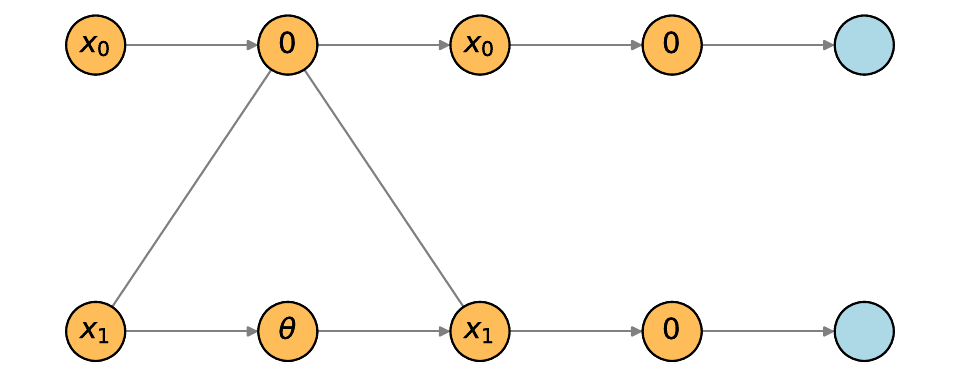}
 \caption{Embedding for kernel SVM classification of 2D datasets. Here, $x_0$ and $x_1$ are the coordinates of the input data, $\theta=\cos(x_0) \cos(x_1)$, and the input state is $\ket{00}$.} 
 \label{fig:kernelSVMMBQC}
\end{figure}

A kernel SVM classifies data points $x$ according to the sign of $\sum_{i=1}^N a_iy_iK(x,x_i) +b$, where $\{(x_i,y_i)\}_{i=1}^N$ is the set of training data, $a_i$ and $b$ are the variational parameters optimized during training, and $K(x,x')$ is the kernel function. In QML, the kernel is typically defined via an embedding (feature map) $x \mapsto \ket{\phi(x)}$ as 
\begin{equation}
  \label{eq:kernel}
  K(x, x') = \abs{\braket{\phi(x)}{\phi(x')}}^2.
\end{equation}
Focusing on 2-dimensional datasets, we adapt an embedding studied in Ref.~\cite{suzuki_analysis_2020} to naturally fit the structure of a single MuTA layer. Figure~\ref{fig:kernelSVMMBQC} shows the measurement angles corresponding to our feature map, explicitly given in Appendix~\ref{append:Kernels}.  

When classifying the different toy datasets in Fig.~\ref{fig:threeclassifiers}, we find that our kernel can be used for data that does not have a strongly non-linear structure, such as \emph{circles} and \emph{blobs}, but not for \emph{moons}. We obtain similar accuracies as Ref.~\cite{suzuki_analysis_2020}. Designing MBQC-specific kernels that can deal with datasets of higher dimension and stronger non-linearity, possibly using kernel alignment \cite{gentinetta2023quantum}, remains an interesting open problem.

\begin{figure}[ht]
  \centering
  \includegraphics[width=0.483\textwidth]{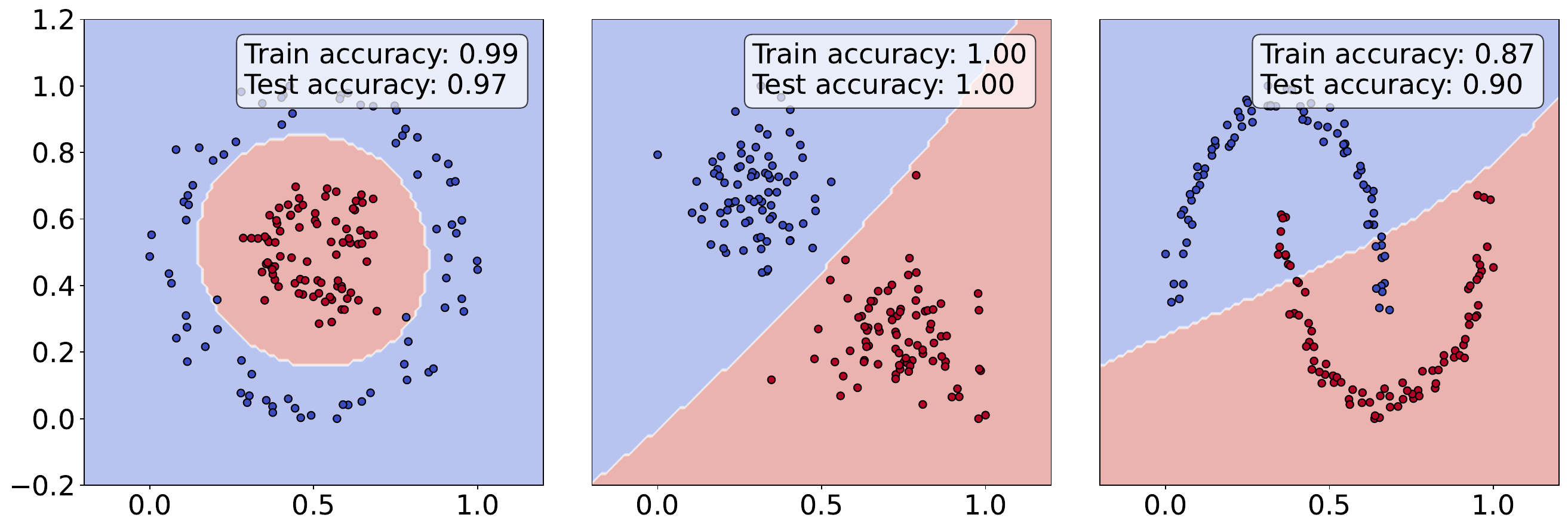}
  \caption{Classification of three different datasets using a SVM with the embedding defined in Fig.~\ref{fig:kernelSVMMBQC} and Eq.~\ref{eq:embedding}. All models were trained on $160$ data points and tested on $40$ data points.}
  \label{fig:threeclassifiers}
 \end{figure}

\section{Hardware-Efficient Ansatz \label{sec:hea}}
Practical implementations of MBQC, e.\,g., with photons, impose hardware restrictions. In particular, the Gottesman-Kitaev-Preskill (GKP) encoding~\cite{gottesman2001encoding} constrains the accessible measurement bases, whereas the dual-rail encoding renders CZ gates probabilistic~\cite{knill2001scheme, shadbolt2012generating}. For MB-QML, this entails either optimizing over a discrete set of measurement angles or using a probabilistic graph state. Here, we focus on photonic GKP computation, where homodyne detection realizes measurements with $\alpha\in\{0,\pi/2\}$, and injecting a $\ket{T}\propto\ket{0} + e^{-i\pi/4}\ket{1}$ magic state supplements $\alpha=\pi/4$ for universality~\cite{knill2004fault, bravyi2012magic}. Below, we propose heuristic algorithms for the discrete optimization over measurement patterns where the permitted angles are $\alpha \in \left\{0, \pi/4, \pi/2 \right\}$. 

Discrete optimization can be NP-hard. Therefore, finding a heuristic for this hardware-efficient ansatz (HEA) is of great practical value. We propose two different approaches. The first one is an epsilon-greedy algorithm that performs a slice-wise search over the measurement angles. We choose slices compatible with the temporal order of the ansatz's flow. In the first round, we randomly initialize all measurement angles and then sequentially replace each slice by its local optimum. If not succeeding, we proceed to optimize two slices at a time. Once the number of slices for simultaneous optimization reaches $L_{\rm{max}}$, we reinitialize and start over (see Appendix~\ref{append:hea} for a pseudo code). Our second approach uses the deep Q-learning algorithm (DQN) from reinforcement learning (RL) \cite{mnih2015human}. Here, the state of the RL agent is defined by an entire measurement pattern and an index $j$ specifying which measurement to modify next. The agent acts by selecting a measurement angle from $\{0, \pi/4, \pi/2\}$ for qubit $j$. Each action yields a new measurement pattern and a reward equal to the average fidelity $1-\ell$, cf. Eq.~\eqref{eq:loss}. We learn the gate $U = (T \otimes \mathbf{1})\cdot \text{Ising}X\!X(-\pi/4)$ with $T=\ket{0}\!\bra{0}+\operatorname{e}^{-i\pi/4}\ket{1}\!\bra{1}$, which can be implemented on a single $(2,0)$-MuTA layer by injecting $\ket{T}$ states in nodes $2$ and $6$ of Fig.~\ref{fig:oneLayerMuta}~\cite{gottesman1999demonstrating}. 

Figures~\ref{fig:learningcurvesGreedyAlgorithm} and~\ref{fig:deepQlearningLC} evaluate HEA training with the epsilon-greedy algorithm and DQN, respectively. We display the minimum infidelity loss, cf. Eq.~\eqref{eq:loss}, against the number of evaluated measurement patterns. This value corresponds to the best measurement pattern found, with respect to the training set, during the execution of either algorithm. Both algorithms generally outperform worst-case random search, making them of practical interest.
For the small learning task under consideration, the epsilon-greedy algorithm shows significantly better convergence than DQN. However, we expect DQN to be advantageous for large input states and structured target measurement patterns. Additionally, the RL approach allows to include actions that variationally depend on the outcomes of mid-circuit measurements.

\begin{figure}[ht]
 \centering
 \includegraphics[width=0.4\textwidth]{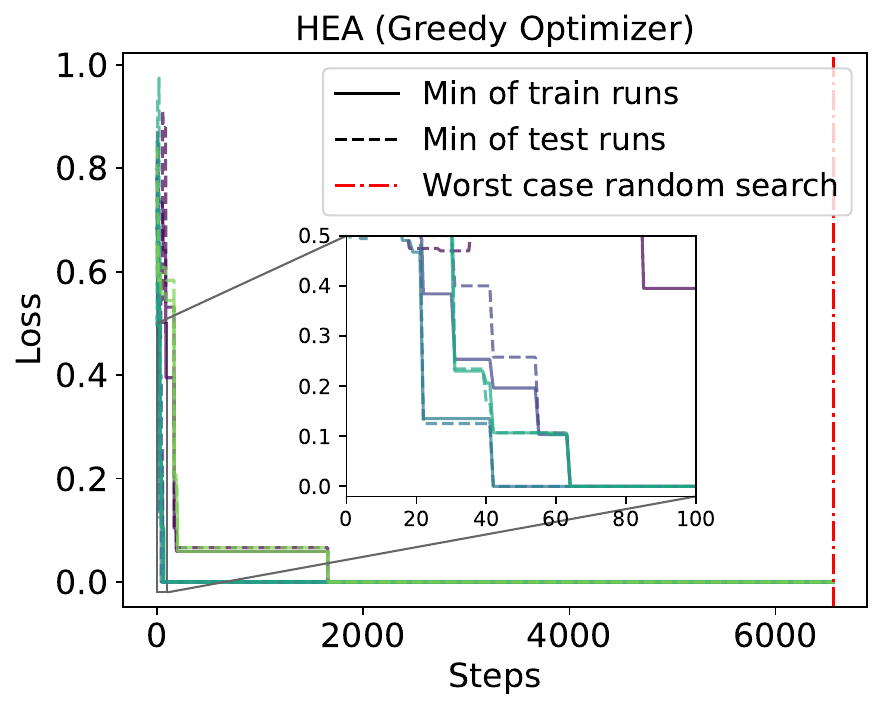}
 \caption{Learning curves for five runs of the greedy algorithm compared to random search. All examples set $\epsilon=0$ and $L_{\rm{max}}=4$. Two out of five runs failed to progress until the parameters were reinitialized. The training and testing datasets are of sizes $N_{\rm train}=7$ and $N_{\rm test}=3$.}
 \label{fig:learningcurvesGreedyAlgorithm}
\end{figure}

\begin{figure}[ht]
 \centering
 \includegraphics[width=0.4\textwidth]{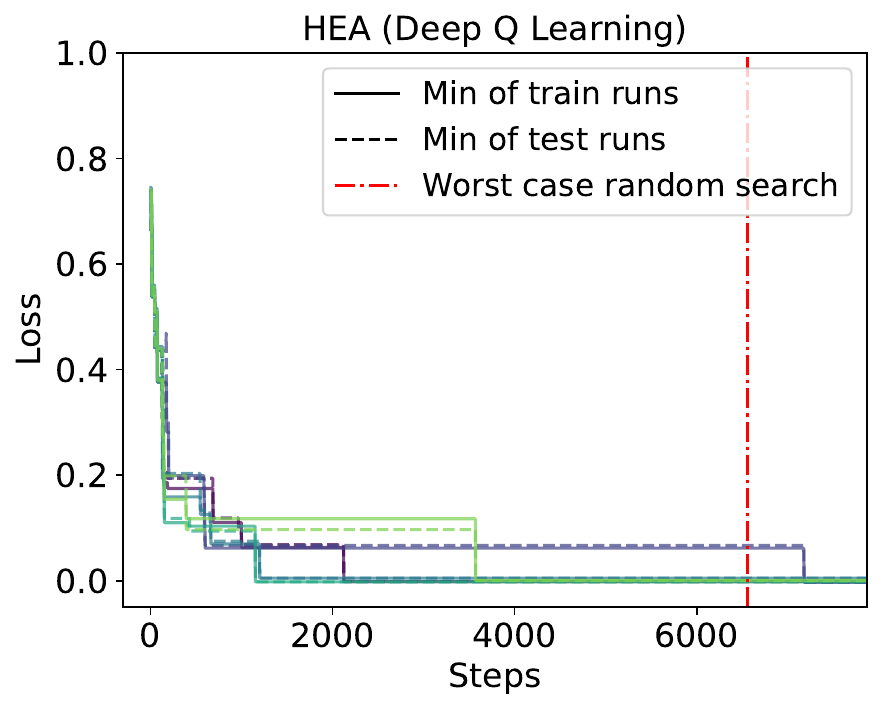}
 \caption{Learning curves for five runs of the DQN algorithm compared to random search. The training and testing datasets are of sizes $N_{\rm train}=21$ and $N_{\rm test}=9$.}
 \label{fig:deepQlearningLC}
\end{figure}

\section{Discussion}\label{sec:discussion}

In this work, we studied how to use MBQC for QML. In particular, we have proposed a framework for MB-QML and introduced a QNN, MuTA, which stands out by simultaneously satisfying numerous desirable properties including universality, tunable entanglement, monotonic expressivity, and versatile bias engineering. We provided numerical evidence for the practical value of MuTA by learning a universal gate set, a quantum state classifier, and a quantum instrument. We also showed how to use MuTA to generate kernel matrices for SVM classification of classical data. Finally, we proposed training algorithms for MuTA that account for the hardware constraints of photonic GKP qubits. 

MuTA facilitates the theoretical analysis of QML in the MBQC framework. As a flexible general-purpose QNN designed for MBQC, it is a promising candidate for QML on near-term MBQC devices. We expect that MuTA will help leverage advantages of MBQC, such as reduced time complexity and compatibility with classical side-processing, for QML.

Several questions arise from this work. Measurements with feedback allow to implement specific maps with fewer resources \cite{raussendorf_long-range_2005, broadbent2009parallelizing, malz_preparation_2023, rossi_using_2021}. MBQC platforms are particularly suited for such algorithms because of the inherent mid-circuit measurements and classical co-processing. However, exploiting these benefits requires a more elaborate model that includes classical and quantum processing. More generally, it would be interesting to explore conditions on the data under which MB-QML is likely to outperform its gate-based counterparts.

Another promising research direction is to apply MB-QML to more complex datasets, including sequential quantum data. This could involve designing recurrent MB-QML models akin to those in Ref.~\cite{bondarenko_learning_2023} or developing attention models based on graph states. Again, mid-circuit measurements with feedback would be key, as they enable the use of classical memory, which can be crucial for resource efficiency.

We focused on deterministic MBQC because channels can be lifted to unitary operations on larger Hilbert spaces (Steinspring dilation~\cite{stinespring1955positive}). However, MBQC also allows to implement channels with fewer quantum resources by leveraging the inherent randomness of measurement outcomes~\cite{majumder2024variational}. It would be interesting to characterize the expressivity and to explore the performance of MuTA equipped with a trainable probabilistic adaptation of measurement angles.

Finally, as observed in Section~\ref{sec:MUTA}, the geometry of a classical feed-forward NN can be mapped onto a corresponding MuTA. This raises the question of whether the resulting MuTA is more expressive than its direct classical counterpart.

\begin{acknowledgments}
The authors would like to thank Alán Aspuru-Guzik, Olivia Di Matteo, Paul Herringer, Joe Salfi, and Ilan Tzitrin for helpful discussions. L.M.C. acknowledges support from the Novo Nordisk Foundation, Grant number NNF22SA0081175, NNF Quantum Computing Programme. P.\,F. acknowledges financial support from the Canada First Research Excellence Fund through the Quantum Materials and Future Technologies Program, and from the Natural Sciences and Engineering Research Council of Canada (NSERC) Quantum Alliance Consortium ``CanQuEST''. D.\,B. thanks support from the Canada First Research Excellence Fund. 
\end{acknowledgments}

\textbf{Data availability:} All simulations are written in Python and can be found at \href{https://github.com/mentpy/mentpy}{https://github.com/mentpy/mentpy}.

\bibliographystyle{apsrev4-2}
\bibliography{apssamp}

\clearpage
\newpage

\appendix 

\section{Monotonic expressivity of classical neurons}\label{App:ExpressMononton}
When designing a model, it is very convenient if increasing its complexity leads to increased capabilities. For NNs this means that additional neurons or connections should enlarge the variational class. 
Such monotonicity of expressivity is especially useful if it allows to pre-train parts of a larger model. Examples include training networks layer by layer or amending a trained model with a low-depth post- or pre-processing network. This motivates the additional requirement that the variational parameters of the smaller network, implementing $\mathcal{N}$, can remain unchanged for the enlarged model to reproduce $\mathcal{N}$.

To make this appendix self-contained, let us define classical NNs.
\begin{definition}[Classical artificial neuron]
    Let $\sigma: \mathbb{R} \rightarrow \mathbb{R}$ be a non-linear and non-constant activation function. A map $\mathfrak{n}_{\mathbf{w},b}: \mathbb{R}^n \rightarrow \mathbb{R}$ with variational parameters $\mathbf{w}\in\mathbb{R}^n$, $b \in \mathbb{R}$ such that $\mathfrak{n}_{\mathbf{w}, b}(\mathbf{x}) = \sigma\left(\mathbf{w}\cdot \mathbf{x} + b \right)$ is called a classical artificial neuron.
\end{definition}
\begin{definition}[Classical neural network]
    A neural network is a composition of neurons. If the output of neuron $i$ serves as an input to neuron $j$ we say that $i$ and $j$ are connected.
\end{definition}
\begin{observation}[Monotonicity of connections] Connections between neurons can always be switched off. Indeed, if the output of neuron $i$ becomes input $x_\iota$ of neuron $j$, setting $w_\iota=0$ in $\mathfrak{n}^{(j)}_{\mathbf{w},b}(\mathbf{x}) = \sigma\left(\sum_{k} w_{k}  x_k + b \right)$ disconnects $j$ from $i$. 
\end{observation}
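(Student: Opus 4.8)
The plan is to read the claim directly off the affine structure of a classical neuron. A neuron $\mathfrak{n}^{(j)}_{\mathbf{w},b}$ acts on its input vector $\mathbf{x}$ by first forming the pre-activation $\sum_k w_k x_k + b$ and then applying $\sigma$. The only place where the output of neuron $i$ enters the computation of neuron $j$ is through the single summand $w_\iota x_\iota$, where $x_\iota$ is the coordinate of $\mathbf{x}$ fed by neuron $i$ and $w_\iota$ is the associated weight. Hence I would argue that the entire dependence of $\mathfrak{n}^{(j)}$ on neuron $i$ is carried by the scalar parameter $w_\iota$.

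First I would fix a network $\mathcal{N}$ and an edge $i\to j$, and compare it with the network $\mathcal{N}'$ obtained by deleting that edge (equivalently, by dropping the coordinate $x_\iota$ from the input of neuron $j$, or by feeding it a fixed dummy value). Next I would specialize the parameters of $\mathcal{N}$ by setting $w_\iota=0$ while leaving every other weight and bias---in neuron $j$ and in all other neurons---unchanged. With $w_\iota=0$ the summand $w_\iota x_\iota$ vanishes identically, so the pre-activation of neuron $j$, and therefore its output, no longer depends on the value produced by neuron $i$; the map computed by $\mathcal{N}$ with these parameters is exactly the map computed by $\mathcal{N}'$. Since all remaining parameters of $\mathcal{N}$ coincide with those of $\mathcal{N}'$, any function realizable by $\mathcal{N}'$ is realizable by $\mathcal{N}$ with the same parameter values on the shared edges, which is the desired parameter-preserving monotonicity.

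Iterating this over a set of edges shows that an arbitrary subset of connections can be switched off simultaneously, and the same reasoning applied to an individual neuron (setting all of its outgoing weights to zero) shows that neurons, too, can be deactivated. I do not expect a genuine obstacle here: the argument is local to each edge and does not interact with the global structure of $\mathcal{N}$ (feed-forward or not, shared nodes or not). The only point that warrants a line of care is bookkeeping---specifying precisely how the ``disconnected'' network treats the now-absent input slot of neuron $j$ so that the equality of the two computed maps is literal rather than merely up to relabeling of inputs.
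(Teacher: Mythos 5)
Your proposal is correct and follows the same reasoning as the paper, whose entire justification is the one-line observation that the dependence of neuron $j$ on neuron $i$ enters only through the summand $w_\iota x_\iota$, so setting $w_\iota=0$ removes it. Your additional remarks on iterating over edge sets and on the bookkeeping for the deleted input slot are sound elaborations but not needed beyond what the paper states.
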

Thus, for classical NNs it is sufficient to concentrate on adding neurons. Assuming that the input data lies within the range of $\sigma$, $x\in\sigma(\mathbb{R})$, we adopt the following
\begin{definition}[Monotonicity of expressivity]\label{def:monotonicity}
    We say that a NN is monotonically expressive if the activation function $\sigma$ of its neurons fulfills 
    \begin{equation}
        \exists w,b\in\mathbb{R}:\sigma(wx+b) = x\quad\forall x\in \sigma(\mathbb{R})
    \end{equation}
\end{definition}
This definition ensures that the expressivity does not fall as the depth of a network increases. For universal networks such as MuTA or classical feed-forward NNs~\cite{FFNN_Univ_deep_min(width)_arbitrary,FFNN_Univ_deep_min(width)0,FFNN_Univ_deep_distributions} the expressivity actually grows---albeit not always monotonically---with the number of fully connected layers until universality is reached.

Do all kinds of classical artificial neurons give rise to monotonically expressive networks? It turns out that the answer is no: only ReLU neurons and their close relatives do. We speculate that the monotonicity of expressivity is one of the main reasons why ReLU networks are so successful~\cite{ReLU_advantage}.
\begin{definition}[ReLU-like neurons]
    We call a neuron ReLU-like if $\exists \Omega\subset\mathbb{R}$ such that $\sigma\vert_\Omega$ is a non-constant affine function,
    \begin{equation}
        \sigma\vert_\Omega(x) = vx+c,\quad v,c\in\mathbb{R},\;v\neq 0,
    \end{equation}
    and $\sigma(\Omega)=\sigma(\mathbb{R})$.
\end{definition}
\begin{theorem}[Monotonicity of expressivity for classical NNs]
    Only ReLU-like neurons are monotonically expressive.
\end{theorem}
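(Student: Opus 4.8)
The plan is to prove the biconditional by establishing both directions: (i) every ReLU-like neuron is monotonically expressive, and (ii) every monotonically expressive neuron is ReLU-like. Direction (i) is the easy one. Given a ReLU-like $\sigma$ with $\sigma\vert_\Omega(x)=vx+c$ affine and non-constant on $\Omega$ with $\sigma(\Omega)=\sigma(\mathbb{R})$, I would construct $w,b$ explicitly so that $\sigma(wx+b)=x$ for all $x\in\sigma(\mathbb{R})$. The idea: for $x\in\sigma(\mathbb{R})=\sigma(\Omega)$, I want $wx+b\in\Omega$ and then $v(wx+b)+c=x$. Solving the affine equation gives $w=1/v$ and $b=-c/v$; I then need to check that with this choice $wx+b$ indeed lands in $\Omega$ for every $x\in\sigma(\Omega)$. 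This holds because $x=\sigma(y)=vy+c$ for some $y\in\Omega$ forces $wx+b=(vy+c)/v-c/v=y\in\Omega$. So the forward direction is a short calculation.

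Direction (ii) is the substantive one. Suppose $\sigma$ is monotonically expressive, i.e.\ there exist $w,b$ with $\sigma(wx+b)=x$ for all $x\in\sigma(\mathbb{R})$. First I note that the identity $\sigma(wx+b)=x$ on the set $S:=\sigma(\mathbb{R})$ forces $\sigma$ restricted to the set $\Omega:=\{wx+b:x\in S\}=wS+b$ to be a bijection onto $S$ with inverse the affine map $x\mapsto wx+b$; in particular $w\neq 0$ (otherwise the left side is constant while $x$ ranges over $S$, which is non-degenerate since $\sigma$ is non-constant). Hence on $\Omega$, $\sigma$ is the inverse of an affine bijection, so $\sigma\vert_\Omega(t)=(t-b)/w$, which is affine and non-constant with slope $v=1/w\neq 0$ and intercept $c=-b/w$. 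Finally $\sigma(\Omega)=S=\sigma(\mathbb{R})$ by construction. This is exactly the definition of ReLU-like, with the same $\Omega$, so we are done.

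The main obstacle — really the only place where care is needed — is making sure the two "$\Omega$"s match up and that all the non-degeneracy conditions (non-constant $\sigma$, the distinction $\sigma(\mathbb{R})$ versus a single point) are invoked correctly, so that $v\neq 0$ genuinely follows rather than being assumed. One subtlety: the hypothesis "input data lies in the range of $\sigma$" (Definition~\ref{def:monotonicity}) is what lets us restrict attention to $x\in\sigma(\mathbb{R})$ throughout, and one should be explicit that $\sigma(\mathbb{R})$ contains at least two points — this is guaranteed because $\sigma$ is non-constant. I would also remark that the set $\Omega$ exhibited in direction (ii) need not be all of $\mathbb{R}$, nor even an interval; the definition of ReLU-like only requires existence of such a subset, which is why it captures genuine ReLU ($\Omega=[0,\infty)$) as well as leaky-ReLU and affine-on-a-halfline variants, but excludes, e.g., the logistic sigmoid or $\tanh$, whose ranges are bounded open intervals on which the function is nowhere affine.

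Concretely, the write-up will be: one paragraph proving "ReLU-like $\Rightarrow$ monotonically expressive" by the substitution $w=1/v$, $b=-c/v$ and the verification $wx+b\in\Omega$; one paragraph proving the converse by observing that $\sigma(wx+b)=x$ on $S=\sigma(\mathbb{R})$ makes $\sigma$ an affine-inverse (hence affine) bijection on $\Omega=wS+b$, extracting $v=1/w$, $c=-b/w$, and noting $\sigma(\Omega)=\sigma(\mathbb{R})$.
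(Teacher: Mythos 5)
Your proof is correct and takes essentially the same route as the paper's: both directions hinge on the observation that the inverse of a non-constant affine map is affine, with the explicit choice $w=1/v$, $b=-c/v$ in the forward direction and $\Omega=w\,\sigma(\mathbb{R})+b$ (the image of $\sigma(\mathbb{R})$ under the affine preimage) in the converse. Your write-up is somewhat more explicit than the paper's terse version---e.g., verifying that $wx+b$ actually lands in $\Omega$, and spelling out why $w\neq 0$ follows from $\sigma$ being non-constant---but the underlying argument is identical.
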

\begin{proof}
    The main observation is that the inverse of an affine function $a(x) = v x + c, \ v \neq 0$ is also affine,
    \begin{equation}
        a^{-1}(x) = w  x + b, \quad w = \frac{1}{v},\ b= -\frac{c}{v}.
    \end{equation}
    Indeed, $a\left[ a^{-1}(x)\right] = v \left( \frac{x}{v}- \frac{c}{v} \right) + c = x$.

    Let us fist prove that ReLU-like neurons are monotonically expressive. Since $\sigma(\Omega)=\sigma(\mathbb{R})$ and $\sigma\vert_\Omega$ is non-constant affine, there exists an affine preimage $f:\sigma(\mathbb{R})\rightarrow \mathbb{R}$ of $\sigma$; this statement is equivalent to monotonic expressivity as defined in~\ref{def:monotonicity}.

    On the other hand, monotonic expressivity requires an affine $f:\sigma(\mathbb{R})\rightarrow \mathbb{R}$ such that $\sigma[f(x)]=x$ $\forall x\in \sigma(\mathbb{R})$. Since $\sigma$ is not a constant function, $f$ cannot be constant, either. Setting $\Omega=f[\sigma(\mathbb{R})]$, we can thus conclude that $\sigma\vert_{\Omega}$ is non-constant affine and $\sigma(\Omega)=\sigma(\mathbb{R})$, such that the neuron is ReLU-like.
\end{proof}

\section{From MBQC to unitary circuits}\label{append:Generators}
Let us consider graph states $|G\rangle$ over open graphs $G$ with flow~\cite{danos_determinism_2006} that have an equal number of input and output qubits, $|I|=|O|=n$. For such graph states, we derive a simple method for translating from MBQC to the circuit model. Note that the input and output sets may overlap, $I\cap O\neq \emptyset$. For convenience, we call qubits $i\in I \setminus (I\cap O)$ and $o\in O\setminus (I\cap O)$ \emph{proper} input and output qubits, respectively.

We begin our construction by observing that:
\begin{enumerate}
	\item Repeatedly applying the flow function $f$ to a proper input qubit $i$ traces a path from $i$ to a proper output qubit. We call such a path an $f$-path.\\
	\emph{Proof: $f$ traces a path because $f(j)$ is a neighbor of $j$ $\forall j$; this path is finite because $f(j)>j$ and the number of qubits is finite; the final qubit of the path must be an output qubit because otherwise we could prolong the path by applying $f$ once more; this output qubit is proper because input qubits are not in the range of $f$.}
	\item Different $f$-paths do not intersect.\\
	\emph{Proof: if $f(i)=f(j)$ for $i\neq j$ then $j$ is a neighbor of $f(i)$ such that $j>i$, and $i$ is a neighbor of $f(j)$ such that $i>j$, which is a contradiction.}
	\item Every qubit $j\notin I\cap O$ lies on an $f$-path. Qubits $j\in I\cap O$ do not lie on $f$-paths.\\
	\emph{Proof: applying $f$ to $n-|I\cap O|$ proper input qubits defines paths ending in $n-|I\cap O|$ different proper output qubits; if $j\notin I\cap O$ does not lie on any of these paths, repeatedly applying $f$ must trace a path from $j$ to another proper output qubit; however, no other proper output qubits exist; $j\in I\cap O$ cannot lie on an $f$-path because it is neither in the domain nor in the range of $f$.}
	\item All $f$-paths are induced paths; i.\,e., within a single $f$-path, $G$ has no edges between non-consecutive qubits.\\
	\emph{Proof: let $j<k$ be non-consecutive qubits in an $f$-path; then there exists $k'>j$ such that $k=f(k')$; if $j$ is a neighbor of $k$ then $j>k'$, which is a contradiction.}
\end{enumerate}
Hence, any $|G\rangle$ consists of $n-|I\cap O|$ interconnected 1-dimensional cluster states connected to $|I\cap O|$ input-output qubits.

To translate from MBQC to the circuit model, we begin by picking a one-by-one measurement order compatible with the flow. The first qubit in the measurement order will be a proper input qubit, $q_1\in I\setminus (I\cap O)$. \emph{(Proof: if $q_1\notin I$ there exists $j$ such that $f(j)=q_1$ and $j<q_1$, which contradicts that $q_1$ is the first qubit; $q_1\notin O$ because output qubits are not measured.)} Every qubit $q\in V\setminus O$ has an edge to its successor $f(q)$. If $q_1$ has additional neighbors, they must lie in $I$; in this case, we call $q_1$ a junction qubit. \emph{(Proof: let $j$ be a neighbor of $q_1$ with $j\neq f(q_1)$, $j\notin I$; then there exist $j'\neq q_1$ such that $j=f(j')$ and $j'<q_1$, which is a contradiction.)} The edges between $q_1$ and other input qubits can be absorbed into the input state by applying the corresponding CZ gates. Measuring $M_\alpha^{q_1}$ then applies $U_1=H\e^{i\alpha Z/2}$ to the updated input state at $q_1$ and teleports it to $f(q_1)$~\cite{raussendorf_review_2012}. We have thus translated the first measurement into a unitary gate.

The obtained circuit defines an input state to a reduced MBQC from which qubit $q_1$ has been removed, $V'=V\setminus \{q_1\}$. The input and output sets of this reduced MBQC are $I'=\left(I\setminus \{q_1\} \right)\cup \{f(q_1)\}$ and $O'=O$, respectively, such that $|I'|=|O'|=n$. Restricting $f$ to $(V\setminus O)\setminus \{q_1\}=V'\setminus O'$ yields a flow function $f'$. The initially picked measurement order, from the second qubit $q_2$ onward, remains compatible with $f'$. Hence, we can proceed by translating the measurement of $q_2$ and, by induction, all subsequent measurements in exactly the same way as we have translated the first measurement. Once the MBQC has been reduced to $O$, applying a CZ gate for each edge between output qubits finalizes the translation.

Let us summarize the translation algorithm. For clarity, we introduce a bijective map from MBQC to circuit input qubits, $\tau:I\to \{0,\ldots,n-1\}$. Then measuring qubit $q_j$ on the $f$-path from $i\in I\setminus (I\cap O)$ translates to $U_j=H_{\tau(i)}\e^{i\alpha_j Z_{\tau(i)}/2}$, where $\alpha_j$ is the measurement angle. The gates $U_j$ act in the measurement order of MBQC. If $q_j$ is a junction qubit---a qubit that retains more than one edge until being measured---we have to insert CZ gates before $U_j$. For a retained edge between $q_j$ and a qubit $q$ on the $f$-path from $i'$ or $q=i'$, $i'\in I$, the specific gate is $\operatorname{CZ}_{\tau(i),\tau(i')}$. Edges between output qubits $o_1,o_2\in O$ translate into $\operatorname{CZ}_{\tau(i_1),\tau(i_2)}$ at the end of the circuit, where $o_j$ is a qubit on the $f$-path from $i_j$ or $o_j=i_j$, $i_j\in I$.

This translation algorithm enables us to express MuTA in terms of the unitary circuit model. A (connected) MuTA layer $(2,0)$, see Fig.~\ref{fig:oneLayerMuta} in the main text, has two junction qubits, $1$ and $5$. Setting $\tau(0)=a$, $\tau(5)=b$, where we use alphabetic labels for better readability, we obtain the unitary circuit
\begin{align}\label{eq:unitarymuta20}
\begin{split}
	U_{(2,0)}={}& H_b\e^{i\alpha_8 Z_b/2}H_b\e^{i\alpha_7 Z_b/2}H_a\e^{i\alpha_3 Z_a/2}H_a\e^{i\alpha_2 Z_a/2}\\
	&H_a\e^{i\alpha_1 Z_a/2}\operatorname{CZ}_{ab}H_b\e^{i\alpha_6 Z_b/2}H_b\e^{i\alpha_5 Z_b/2}\operatorname{CZ}_{ab}\\
        &H_a\e^{i\alpha_0 Z_a/2}\\
	={}&\e^{i\alpha_8 X_b/2}\e^{i\alpha_7 Z_b/2}\e^{i\alpha_3 X_a/2}\e^{i\alpha_2 Z_a/2}\e^{i\alpha_1 X_a/2}\\&H_a\operatorname{CZ}_{ab}\e^{i\alpha_6 X_b/2}\operatorname{CZ}_{ab}H_a\e^{i\alpha_5 Z_b/2}\e^{i\alpha_0 Z_a/2}\\
	={}&\e^{i\alpha_8 X_b/2}\e^{i\alpha_7 Z_b/2}\e^{i\alpha_3 X_a/2}\e^{i\alpha_2 Z_a/2}\e^{i\alpha_1 X_a/2}\\&C\!N\!OT_{ba}\e^{i\alpha_6 X_b/2}C\!N\!OT_{ba}\e^{i\alpha_5 Z_b/2}\e^{i\alpha_0 Z_a/2}\\
	={}&\e^{i\alpha_8 X_b/2}\e^{i\alpha_7 Z_b/2}\e^{i\alpha_3 X_a/2}\e^{i\alpha_2 Z_a/2}\e^{i\alpha_1 X_a/2}\\&\text{Ising}X\!X_{ab}(-\alpha_6)\e^{i\alpha_5 Z_b/2}\e^{i\alpha_0 Z_a/2}.\\
\end{split}
\end{align}

Correspondingly, a general fully connected MuTA layer $(n,i)$ implements
\begin{align}\label{eq:unitarymuta}
\begin{split}
	U_{(n,i)}={}&\left(\prod_{k=0}^{n-1} H_k\e^{i\alpha_{k3}Z_k/2}H_k\e^{i\alpha_{k2}Z_k/2}\right)H_i\e^{i\alpha_{i1}Z_i/2}\\
	&\left(\prod_{\substack{j=0\\j\neq i}}^{n-1}\operatorname{CZ}_{ij}H_j\e^{i\alpha_{j1}Z_j/2}H_j\e^{i\alpha_{j0}Z_j/2}\operatorname{CZ}_{ij}\right)\\&H_i\e^{i\alpha_{i0}Z_i/2}\\
	={}&\left(\prod_{k=0}^{n-1}\e^{i\alpha_{k3}X_k/2}\e^{i\alpha_{k2}Z_k/2}\right)\e^{i\alpha_{i1}X_i/2}\\
	&\left(\prod_{\substack{j=0\\j\neq i}}^{n-1}\text{Ising}X\!X_{ij}(-\alpha_{j1})\e^{i\alpha_{j0}Z_j/2}\right)\e^{i\alpha_{i0}Z_i/2}.
\end{split}
\end{align}
Here, measurement angles $\alpha_{ij}$ refer to qubits $Q_{i,j}$ in row $i$ and column $j$, and the circuit input qubits are enumerated according to the rows of MuTA, $\tau(Q_{i,0})=i$. For an ansatz that is not fully connected, $\text{Ising}X\!X_{ij}(-\alpha_{j1})$ in Eq.~\eqref{eq:unitarymuta} has to be replaced with $\e^{i\alpha_{j1}X_j/2}$ for each row $j$ that is not connected to row $i$ by a triangle $T_{i,j}$.

\section{Expressivity of QML ansätze \label{append:LieAlg}}

The expressivity of parameterized quantum circuits assembled from elementary gates $\exp{-i\theta H_j}$, $H_j\in G$ is determined by the set of generators $G$. Specifically, in the limit of infinite depth, such a circuit implements $\exp{\mathfrak{g}}$, 
where the dynamical Lie algebra of the circuit, $\mathfrak{g}=\operatorname{span}_\mathbb{R}\langle iG\rangle_{[\cdot,\cdot]}$, is the vector space spanned by the Lie closure of $iG$.
This quantity not only characterizes the expressivity of the variational circuit but also allows to predict barren plateaus~\cite{ragone2023unified}. In particular, if $\mathfrak{g}$ is simple, the variance of infidelity loss is bounded by
\begin{align}
    \label{eq:varianceLieAlg}
    \mathrm{Var} [\ell] \leq \frac{2^n}{\dim(\mathfrak{g})}.
\end{align}

For MB-QML, we can determine $G$ as outlined in Appendix~\ref{append:Generators}.
We find that the Lie algebra of the (fully connected) MuTA ansatz with $n$ rows, proposed in Section \ref{sec:MUTA}, is the simple Lie algebra $\mathfrak{g}=\mathfrak{su}( 2^n )$ with $\dim(\mathfrak{g})=2^{2n}-1$. This is a direct consequence of the universality of MuTA. Equation~\eqref{eq:varianceLieAlg} emphasizes the importance of bias in QML. Luckily, MuTA offers versatile bias engineering as summarized in Property~\ref{prop:bias}.

\section{Noisy resource state \label{append:Depolarizing}}
QML protocols can suffer not only from noisy data but also from noise affecting the models themselves. Here, we study the action of the depolarizing channel on the resource state of an MB-QML circuit. For a single qubit, the channel with noise strength $p$ is given by
\begin{equation*}
    \mathcal{N}_{\rm{dep}}(\rho) = (1-p)\rho + \frac{p}{3} X \rho X + \frac{p}{3} Y \rho Y + \frac{p}{3} Z \rho Z.
\end{equation*}
We apply this channel to each MuTA qubit. In the presence of such noise, standard measurement-angle adaptation does, in general, no longer compensate ${({-1})}$-outcomes. Therefore, we take into account all possible combinations of measurement outcomes to obtain the mixed output state $\rho$ of the MB-QML model. We simulate this by propagating the angle adaptation operations using the deferred measurement principle. In the cost function, the fidelity $F=|\!\bra{\psi}\ket{\phi}\!|^2$ is replaced by $\bra{\phi}\!\rho\!\ket{\phi}$, where $\ket{\phi}$ is the desired output state. 

\begin{figure}[ht]
    \centering
    \includegraphics[width=0.4\textwidth]{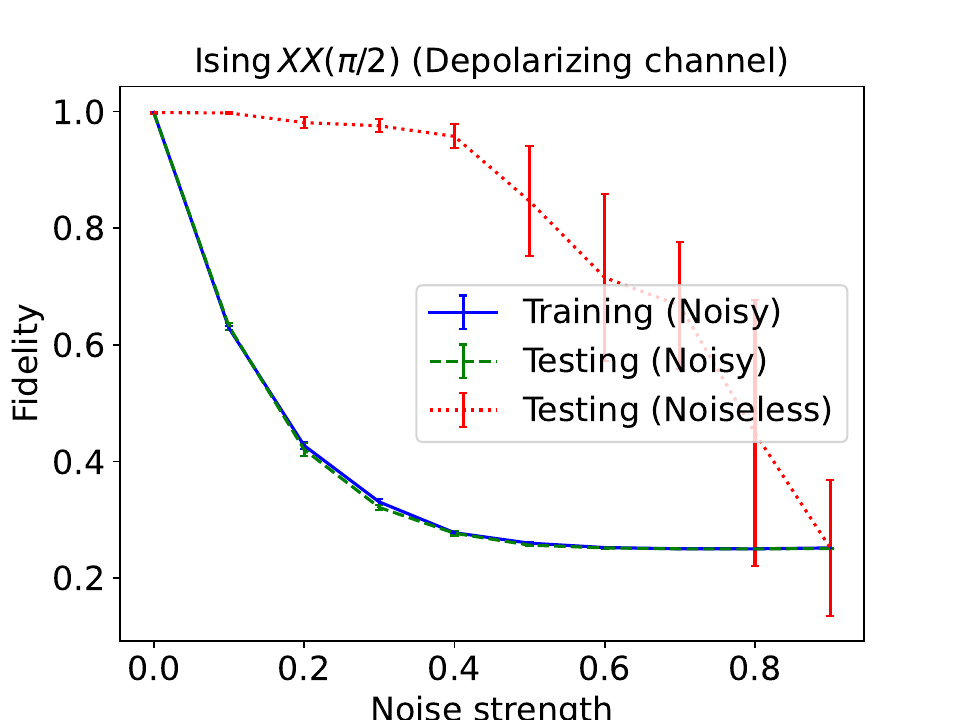}
    \caption{Effect of a depolarizing channel affecting the MuTA qubits in Fig.~\ref{fig:oneLayerMuta} on learning $\text{Ising}X\!X(\pi/2)$. We average over $5$ runs. Each run uses a different dataset of Haar-random states, $10$ for training and $3$ for testing.}
    \label{fig:depolarizingNoise}
\end{figure}

Figure \ref{fig:depolarizingNoise} shows training and testing  outcomes as a function of $p$ for learning the $\text{Ising}X\!X(\pi/2)$ gate on a distorted MuTA, cf. Sections~\ref{sec:universalgates} and~\ref{sec:noise}. Testing on the ideal resource state reveals that the model learns the measurement angles appropriate for noiseless operation up to a quite large value of $p\sim 0.4$. This robustness presents an interesting use case: one could use a less expensive, noisier device for initial training and a more expensive, lower-noise device for fine-tuning and inference.

\section{Quantum kernels with MBQC \label{append:Kernels}}
To design a quantum kernel in MBQC, we choose a graph state and a map from data points $x$ to measurement patterns. Figure~\ref{fig:kernelSVMMBQC} defines the feature map 
\begin{multline}
    \label{eq:embedding}
    \ket{\phi(x)} =  R^z_{1} (-x_{1}) R^z_{0} (-x_{0}) e^{i\cos(x_{0})\cos(x_{1})X_0 X_1 /2}\\
    R^z_{1} (-x_{1}) R^z_{0} (-x_{0}) \ket{00}
\end{multline}
with $R^z_j(\alpha)= \exp(-i\alpha Z_j/2)$. 
This embedding is structurally similar to those studied in Ref.~\cite{suzuki_analysis_2020}.
The kernel function $K(x,x')=\abs{\braket{\phi(x)}{\phi(x')}}^2$ can be evaluated via the SWAP test using two graph states to prepare both $\ket{\phi(x)}$ and $\ket{\phi(x')}$. \\

\section{Training the hardware-efficient ansatz}\label{append:hea}
In Section \ref{sec:hea}, we used heuristic algorithms to optimize the measurement pattern over a discrete set of angles, specifically those accessible in the GKP encoding with $\ket{T}$ magic states. Algorithm \ref{prog:greedyOptimization} presents our epsilon-greedy temporal-slice-wise optimization technique.

\begin{algorithm}[H]
\caption{Slice-wise epsilon-greedy search algorithm for the discrete optimization of measurement patterns.}
\label{prog:greedyOptimization}
\begin{algorithmic}[1]
\Function{GreedyOpt}{$\ell, \Lambda, \varepsilon, n_{\text{reset}}, L_{\max}, \delta$}
    \For{$0:n_{\text{reset}}-1$}
        \State $\theta = \text{randomAngles(\ )}$ 
        \For{$m = 1$ to $L_{\max}$}
            \State $\theta = \text{LayerOpt}(\ell, \theta, \Lambda, m, \varepsilon)$
            \If{$\ell(\theta) < \delta$}
                \State \textbf{return} True, $\theta$                    
            \EndIf
        \EndFor
    \EndFor
    \State \textbf{return} False, $\theta$
\EndFunction
\Function{LayerOpt}{$\ell, \theta, \Lambda, n, \varepsilon$}
    \State $\theta_{\text{best}} = \theta$
    \For{$i = 0$ to $\text{len}(\Lambda) - n$}
        \State $\Lambda' = \text{concatenate}(\Lambda[i:i+n])$
        \ForAll{$\theta' \in \{0,\pi/4,\pi/2\}^{|\Lambda'|}$}
            \State $\theta_{\text{curr}} = \theta' \cup \theta_{\text{best}}[\Lambda \backslash \Lambda']$
            \If{$\ell(\theta_{\text{curr}}) < \ell(\theta_{\text{best}})$ or $\text{random}(0,1) < \varepsilon$}
                \State $\theta_{\text{best}} = \theta_{\text{curr}}$
            \EndIf
        \EndFor
    \EndFor
    \State \textbf{return} $\theta_{\text{best}}$
\EndFunction
\end{algorithmic}
\end{algorithm}
    
Algorithm~\ref{prog:greedyOptimization} uses the following notation. The problem-specific input variables are:
\begin{itemize}
    \item $\ell(\theta)$: Loss function as a function of measurement angles $\theta$.
    \item $\Lambda$: A list of slices compatible with the temporal order of the ansatz. For the $(2,0)$-MuTA layer of Fig.~\ref{fig:oneLayerMuta}, we set $\Lambda = [\{0\}, \{5\}, \{6\}, \{1\}, \{2,7\}, \{3,8\}, \{4,9\} ]$.
\end{itemize}

The hyperparameters are:
\begin{itemize}
    \item $\varepsilon$: Probability of exploring a random set of angles.
    \item $n_{\text{reset}}$: Maximum number of resets.
    \item $L_{\max}$: Maximum number of slices for simultaneous optimization.
    \item $\delta$: Threshold loss value for early stopping.
\end{itemize}

\newpage

\end{document}